\newcommand{\Bc}{\mathcal{B}}
\newcommand{\Ec}{\mathcal{E}}
\newcommand{\Sc}{\mathcal{S}}
\newcommand{\Vc}{\mathcal{V}}
\newcommand{\Yc}{\mathcal{Y}}
\newcommand{\Zc}{\mathcal{Z}}
\renewcommand{\Pr}{\mathscr{P}}
\newcommand{\aep}{{\mathcal{T}_{\epsilon}^{(n)}}}
\newcommand{\Sh}{{\hat{S}}}
\newcommand{\Vh}{{\hat{V}}}
\newcommand{\Yh}{{\hat{Y}}}
\newcommand{\lh}{{\hat{l}}}
\newcommand{\uh}{{\hat{u}}}
\newcommand{\vh}{{\hat{v}}}
\newcommand{\yh}{{\hat{y}}}
\newcommand{\Vt}{{\tilde{V}}}
\newcommand{\vt}{{\tilde{v}}}
\def\d{\delta}
\def\e{\epsilon}
\DeclareMathOperator\E{E}
\let\P\relax
\DeclareMathOperator\P{P}
\newcommand{\Bern}{\mathrm{Bern}}
\newcommand{\U}{\mathcal{U}}
\def\textiid{i.i.d.\@\xspace}
\newcommand\iid{\ifmmode\text{ i.i.d. } \else \textiid \fi}
\theoremstyle{plain}
\newtheorem{theorem}{Theorem}
\newtheorem{lemma}{Lemma} 
\newtheorem{corollary}{Corollary}
\def\Pr{ {\mathrm{Pr}}}
\def\cV{ {\mathcal{V}}}
\begin{document}

\title{Compression with Actions}
\author{Lei Zhao, Yeow-Khiang Chia and Tsachy Weissman
\thanks{ Lei Zhao was with Stanford University when this work was done. He is now with Jump Operations LLC. Email: zhaolei122@alumni.stanford.edu}
\thanks{Yeow-Khiang Chia was with Stanford University when this work was done. He is now with Institute for Infocomm Research, Singapore. Email: yeowkhiang@gmail.com}  
\thanks{Tsachy Weissman is with Stanford University. Email: tsachy@stanford.edu}
\thanks{Material presented in part at Allerton Conference on Communications, Control and Computing, 2011.}
}

\maketitle


\begin{abstract}
We consider the setting where actions can be used to modify a state sequence before compression. The minimum rate needed to losslessly describe the optimal modified sequence  is characterized when the state sequence is either non-causally or causally available at the action encoder.  The achievability is closely related to the optimal channel coding strategy for channel with states. We also extend the analysis to the the lossy case.
\end{abstract}
\section{Introduction}
Consider the standard Shannon-theoretic lossy source coding setting where we have a source $S^n$ that we wish to perform lossy compression on. The encoder receives the source $S^n$ and produces an index $M$ that is sent to the decoder. Based on the index, the decoder produces a lossy reconstruction, $\Sh^n$, such that the per symbol distortion constraint is satisfied. An alternative view of this problem is as one where the encoder is first required to produce the reconstruction sequence $\Sh^n$ and then uses a lossless compression algorithm to describe $\Sh^n$ to the decoder. This point of view on lossy source coding, depicted in Figure \ref{fig:lossysc}, has been instrumental in recent developments of an approach to universal and implementable lossy compressors, cf.~\cite{SMW2012},~\cite{SW2012} and references therein.

In this paper, we generalize the above setting by asking the following question: what if the encoder makes an ``error'' in outputting the reconstruction sequence? The encoder may wish to take ``actions'' to output a sequence of reconstruction symbols. However, due to noise, the reconstruction symbols at the output of the encoder may be different from the intended reconstruction symbols. In this case, we are still interested in sending the reconstruction sequence (the \textit{modified} source sequence) to the decoder. The question then is, what is the optimal rate-distortion tradeoff in such a scenario? As a more concrete example, consider lossy compression of a binary source $S^n$. With the source as input, the encoder first attempts to output the desired reconstruction sequence, but due to errors in the circuitry of the encoder, a bit that is meant to be one can still be zero with some probability and vice versa. Using a universal lossless compression algorithm, we transmit the output of the ``faulty'' encoder, $\Sh^n$, to the decoder. We are now interested in the optimum rate-distortion tradeoff under the assumption of a ``faulty'' encoder.
\begin{figure}[h]{
\psfrag{pyas}[c]{\small $\P_{Y|A,S}$}
\psfrag{s}[.2]{\small $S^n$}
\psfrag{sh}[c]{\small $\Sh^n$}
\psfrag{m}[c]{\small $2^{nR}$}
\psfrag{se}[c]{\small Encoder}
\psfrag{dec}[c]{\small Decoder}
\psfrag{ulc}[c]{\small Univ. lossless}
\centerline{\includegraphics[width=3.35in]{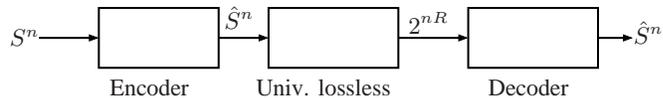}}} 
\caption{Lossy source coding.}\label{fig:lossysc}
\label{fig.lossysc}
\end{figure}

As another example of our general setting, which may seem at first sight to be unrelated to the question we asked above, imagine that we have a number of robots working on a factory floor and the positions of all the robots need to be reported to a remote location. Letting $S$ represent the position of a robot, we would expect to send $H(S)$ bits to the remote location. However, what if the robots can take {\em actions} to change their positions so that they can be more efficiently described? A local command center can first give commands (actions) to the robots so that they move in a cooperative way into a final position sequence that requires fewer bits to describe. The command center may face two issues in general: cost constraints and uncertainty. A cost constraint occurs because each robot should save its power and not move too far away from its current location. The uncertainty is a result of the robots not moving exactly as instructed by the local command center. 


Both examples are instantiations of the problem setting illustrated in Fig.~\ref{fig.setup1} (Formal definitions are given in the next section). Here, $S^n$ is our observed source (or state) sequence. We assume a general cost function $\Lambda(a,s,y)$ and  a general relation, specified by a conditional PMF $p(y|a,s)$, relating the modified source sequence to be compressed to the original source sequence (state) and action taken by the encoder toward modifying it. As shown in the preceding examples, we are interested in compressing the final output $Y^n$. 
\begin{figure}[h]{
\psfrag{pyas}[c]{\small $\P_{Y|A,S}$}
\psfrag{s}[.2]{\small $S^n$}
\psfrag{y}[c]{\small $Y^n$}
\psfrag{r}[c]{$\small 2^{nR}$}
\psfrag{z}[c]{\small $Z^n$}
\psfrag{yr}[c]{\small $\Yh^n$}
\psfrag{a}[c]{\small $A^n$}
\psfrag{en}[c]{\small Compressor}
\psfrag{dec}[c]{\small Decoder}
\psfrag{ac}[l]{\small Action Encoder}
\centerline{\includegraphics[width=3.7in]{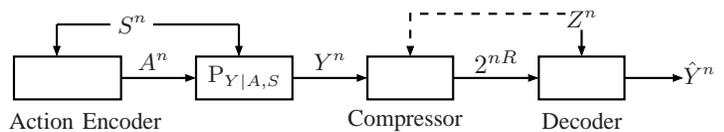}}}
\caption{Compression with actions. The Action encoder first observes the state sequence $S^n$ and then generates an action sequence $A^n$. The $i$th output $Y_i$ is the output of a channel $p(y|a,s)$ when $a=A_i$ and $s = S_i$. The compressor outputs a description of $Y^n$, $M\in [1:2^{nR}]$, from $Y^n$ alone if the side information $Z^n$ is not available at the compressor. If the side information is available, then the compressor generates the description based on $Y^n$ and $Z^n$. The remote decoder generates $\hat{Y}^n$ based on $M$ and its available side information $Z^n$ as a reconstruction of $Y^n$.}
\label{fig.setup1}
\end{figure}

Our problem setup is also closely related to the channel coding problem when the state information is available at the encoder. The case where the state information is causally available was first solved by Shannon in \cite{Shannon1958}. When the state information is non-causally known at the encoder, the channel capacity result was derived in \cite{GP1980} and \cite{HG1983}. Various interesting extensions can be found in \cite{SC2002,SK2005,KSC2008,SS2009,GWS2010}. The difference in our approach described here is that we make the output of the channel as compressible as possible. 
Our main results when the decoder requires lossless reconstruction are given in section \ref{sect:3}, where we characterize the rate-cost tradeoff function for the setting in Fig. \ref{fig.setup1}. We also characterize the rate-cost function when $S^n$ is only causally known at the action encoder. In section \ref{sect:4}, we extend the setting to the lossy case where the decoder requires a lossy version of $Y^n$.{We characterize the rate-distortion cost function when $S^n$ is causally known at the action encoder and the side information $Z^n$ is available at both the compressor and the decoder. For other settings, we give achievable schemes for the rate-distortion cost functions. We conclude in Section \ref{sect:5}, where we mention some possible extensions for future consideration.}

\section{Definitions} \label{sect:2}
We give formal definitions for the setups under consideration in this section. We will follow the notation of \cite{El-Gamal--Kim2010}. Sources $(S^n, Z^n)$ are assumed to be i.i.d.; i.e. $(S^n, Z^n) \sim \prod_{i=1}^n p_{S,Z}(s_i,z_i)$.
\subsection{Lossless case with no side information at the compressor}
We now give the definitions for the case when the side information $Z^n$ is not available at the compressor. Referring to Figure \ref{fig.setup1}, a $(n, 2^{nR})$ code for this setup consists of
\begin{itemize}
  \item an action encoding function $f_a: \mathcal{S}^n \to \mathcal{A}^n$;
  \item a compression function $f_c: \mathcal{Y}^n \to M \in [1:2^{nR}]$;
  \item a decoding function $f_d: [1:2^{nR}]\times \mathcal{Z}^n \to \hat{Y}^n$.
\end{itemize}
The average cost of the system is $\E\Lambda(A^n,S^n,Y^n)\triangleq \frac{1}{n}\sum_{i=1}^n \E\Lambda(A_i,S_i,Y_i)$. A rate-cost tuple $(R,B)$ is said to be achievable if there exists a sequence of codes such that
\begin{align}
\limsup_{n\rightarrow\infty}\mathrm{Pr}( Y^n\neq f_d( f_c(Y^n), Z^n) )=0, \label{eqn:pe}\\
\limsup_{n\rightarrow\infty} \E\Lambda(A^n,S^n,Y^n)\leq B, \label{eqn:cost}
\end{align}
where $\Lambda(A^n,S^n,Y^n) = \sum_{i=1}^n \Lambda(A_i,S_i,Y_i)/n$. Given cost $B$, the {\em rate-cost function}, $R(B)$, is then the infimum of rates $R$ such that $(R,B)$ is achievable.

BC

\textit{Remark:} Suppose that the channel is given by $\P_{Y|A,S} = 1_{Y = A}$ (where $1_{(.)}$ is the indicator function) and that the cost constraint is given by $\Lambda(A^n, S^n)$, then we recover the standard lossy source coding setting with $A^n$ being the reconstruction sequence. 

EC

\subsection{Lossless case when side information is available at the compressor}
In the case when side information $Z^n$ is available at the compressor, the definitions remain mostly the same, with the exception that the compression function is now given by
\begin{align*}
f_c: \Yc^n \times \Zc^n \to M \in [1:2^{nR}].
\end{align*}
\subsection{Lossy case}
In the setting where the decoder requires a lossy version of $Y^n$, the definitions remain largely the same, with the exception that the probability of error constraint, inequality (\ref{eqn:pe}), is replaced by the following distortion constraint.
\begin{align}
\limsup_{n \to \infty}\E d(Y^n,\hat{Y}^n) = \limsup_{n\to \infty}\frac{1}{n}\sum_i^n \E d(Y_i,\hat{Y}_i)\leq D. \label{eqn:dist}
\end{align}
A rate $R$ is said to be achievable if there exists a sequence of $(n, 2^{nR})$ codes satisfying both the cost constraint (inequality \ref{eqn:cost}) and the distortion constraint (inequality \ref{eqn:dist}). Given cost $B$ and distortion $D$, the {\em rate-cost-distortion function}, $R(B,D)$, is then the infimum of rates $R$ such that the tuple $(R,B,D)$ is achievable.
\subsection{Causal observations of state sequence}
In both the lossless and lossy case, we will also consider the setup when the state sequence is only causally known at the action encoder. The definitions remain the same, except for the action encoding function which is now restricted to the following form: For each $i\in [1:n]$, $f_{a,i}: \mathcal{S}^i \to \mathcal{A}$.

\section{Lossless case} \label{sect:3}
In this section, we present our main results for the lossless case. For the lossless case, we will only consider the case when the side information is no available at the compressor, as it will be clear from the results that the presence of side information at the compressor does not change the rate-cost regions for both the case when $S^n$ is causally known, and the case when $S^n$ is non-causally known.

Theorem \ref{thm:1} gives the rate-cost function when the state sequence is non-causally available at the action encoder, while Theorem \ref{thm:2} gives the rate-cost function when the state sequence is causally available.
\subsection{Lossless, non-causal compression with action}
\begin{theorem}[Rate-cost function for lossless, non-causal case] \label{thm:1} The rate-cost function for the compression with action setup when state sequence $S^n$ is non-causally available at the action encoder is given by
\begin{equation}\label{eq.RC}
R(B)=\min_{ p(v|s), a=f(s,v): E\Lambda(S,A,Y)\leq B} I(V;S|Z)+H(Y|V,Z),
\end{equation}
where the joint distribution is of the form $p(z,s,v,a,y) = p(z,s)p(v|s)1_{\{f(s,v)=a\}}p(y|a,s)$. The cardinality of the auxiliary random variable $V$ is upper bounded by $|\Vc| \le |\Sc| +2$.
\end{theorem}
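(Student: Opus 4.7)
The plan is to establish both the achievability and converse of $R(B)=\min I(V;S|Z)+H(Y|V,Z)$. For achievability, I would combine a Gelfand--Pinsker-style covering codebook at the action encoder with uniform random binning of $Y^n$ at the compressor and a joint-typicality decoder at the receiver. Fix $p(v|s)$ and $a=f(s,v)$ nearly attaining the minimum, subject to the cost constraint. Generate $2^{n(I(V;S)+\delta)}$ codewords $V^n(l)$ i.i.d.\ $\sim\prod_{i}p_V(v_i)$, and partition $\mathcal{Y}^n$ into $2^{nR}$ bins uniformly at random; both structures are shared. The action encoder finds, by the covering lemma, an $l^{*}$ with $(V^n(l^{*}),S^n)$ jointly typical and applies $A_i=f(S_i,V_i(l^{*}))$. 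The compressor sends the bin index $m=b(Y^n)$. The decoder, given $(m,Z^n)$, outputs the unique $\hat{Y}^n$ in bin $m$ such that $(V^n(\hat{l}),\hat{Y}^n,Z^n)$ is jointly typical for some codeword index $\hat{l}$.

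The key conceptual point is that the compressor need not know $V^n$; the $V$-codebook only prunes the decoder's list of plausible $Y^n$. Using the Markov chain $V-S-Z$, so that $I(V;S)-I(V;Z)=I(V;S|Z)$, a packing-lemma union bound over pairs $(\hat{l},\hat{Y}^n\neq Y^n)$ with $\hat{Y}^n$ in the same bin shows the error probability vanishes whenever $R>I(V;S|Z)+H(Y|V,Z)+O(\delta)$. The union bound must branch on whether $I(V;S)\leq I(V;Y,Z)$ or not, but both branches give the same threshold (in the second case the bound degenerates to $R>H(Y|Z)$, which is implied by $I(V;S|Z)>I(V;Y|Z)$ in that regime). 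The cost constraint $\E\Lambda(S,A,Y)\leq B$ transfers to the empirical average via joint typicality of $(V^n,S^n,Y^n)$.

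For the converse, Fano's inequality yields $H(Y^n|M,Z^n)\leq n\epsilon_n$, and since $M$ is a function of $Y^n$,
\[
nR\geq H(M|Z^n)=I(M;Y^n|Z^n)\geq H(Y^n|Z^n)-n\epsilon_n.
\]
Following the Gelfand--Pinsker converse template, I would identify the auxiliary $V_i=(M,Z^{i-1},S^n_{i+1})$. The Markov chain $V_i-S_i-Z_i$ follows from the i.i.d.\ structure of $(S_i,Z_i)$ and the independence of the channel noise from $Z^n$: conditional on $S_i$, each component of $V_i$ is a function of variables independent of $Z_i$. The main task is the single-letter rewriting
\[
H(Y^n|Z^n)\geq\sum_{i=1}^n\bigl[I(V_i;S_i|Z_i)+H(Y_i|V_i,Z_i)\bigr]-o(n),
\]
using chain rules together with Csisz\'ar's sum identity to swap the conditionings on $Z^{i-1}$ and $S^n_{i+1}$. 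Time-sharing via $V=(Q,V_Q)$ with $Q\sim\mathrm{Unif}\{1,\ldots,n\}$ then produces the single-letter bound, and the cost constraint passes by the same averaging. The cardinality bound $|\mathcal{V}|\leq|\mathcal{S}|+2$ is the standard Fenchel--Eggleston--Carath\'eodory argument: preserve $|\mathcal{S}|-1$ constraints fixing $p_S$, one for the cost $\E\Lambda$, and one for the rate functional.

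The main obstacle I foresee is the single-letter rewriting in the converse---ensuring that the telescoping produces exactly $H(Y_i|V_i,Z_i)$, with only $Z_i$ and not $S_i$ in the conditioning---and verifying the Markov structure in the non-causal regime, where $A^n=f_a(S^n)$ makes $M$ depend on all of $S^n$ through $Y^n$ and thereby complicates the independence checks needed both for the Markov chain $V_i-S_i-Z_i$ and for the Csisz\'ar manipulations.
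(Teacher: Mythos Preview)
Your achievability plan matches the paper's exactly: Gelfand--Pinsker covering at the action encoder, uniform random binning of $Y^n$ at the compressor, and a decoder that searches the received bin for the unique $\hat Y^n$ jointly typical with $Z^n$ and \emph{some} $V^n(l)$. The ``branching'' on whether $I(V;S)\le I(V;Y,Z)$ is unnecessary: the paper's error analysis bounds $\P(\mathcal E_l)$ directly using only the independence of $V^n(l)$ and $Z^n$ for each fixed index $l$, together with the count $|\{y^n:(v^n,z^n,y^n)\in\mathcal T_\epsilon^{(n)}\}|\le 2^{n(H(Y|V,Z)+\delta(\epsilon))}$, and then sums over $l$. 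No case split is needed.

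The converse, however, has a genuine gap in the choice of auxiliary. You propose $V_i=(M,Z^{i-1},S_{i+1}^n)$, by analogy with the Gelfand--Pinsker channel converse. But here $M=f_c(Y^n)$ is a \emph{function of the channel output}, not an independent message. Consequently, given $(A_i,S_i)$ the output $Y_i$ is still random (channel noise), and $M$ depends on that noise through $Y_i$; hence the Markov chain $V_i-(A_i,S_i)-Y_i$ fails. This chain is not cosmetic: the theorem constrains the joint distribution to factor as $p(z,s)p(v|s)\,1_{\{a=f(s,v)\}}\,p(y|a,s)$, so any auxiliary produced by the converse must satisfy $Y-(A,S)-(V,Z)$ in order to lie in the minimization domain. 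With $M$ inside $V_i$ you would obtain $R\ge I(V;S|Z)+H(Y|V,Z)$ for a $V$ outside that domain, and the bound need not dominate the claimed minimum. A second symptom of the same problem is that your single-letterization does not telescope: $H(Y_i|M,Z^i,S_{i+1}^n)+I(M,Z^{i-1},S_{i+1}^n;S_i|Z_i)$ does not sum back to $H(Y^n|Z^n)$ via Csisz\'ar's identity.

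The paper's remedy is to drop $M$ and instead take
\[
V_i=\bigl(Y^{\,i-1},\,S_{i+1}^n,\,Z^{n\setminus i}\bigr).
\]
This choice (i) satisfies $V_i-(A_i,S_i)-Y_i$ because $Y^{i-1}$ involves only past channel noise, and (ii) makes the chain-rule/Csisz\'ar-sum manipulation close exactly:
\[
H(Y^n|Z^n)=\sum_{i}\Bigl[H(Y_i|Y^{i-1},S_{i+1}^n,Z^n)+I(Y^{i-1},S_{i+1}^n,Z^{n\setminus i};S_i|Z_i)\Bigr]
=\sum_i\Bigl[H(Y_i|V_i,Z_i)+I(V_i;S_i|Z_i)\Bigr].
\]
Note also that the paper conditions on the \emph{entire} $Z^{n\setminus i}$, not just $Z^{i-1}$; this is what allows the i.i.d.\ structure of $(S^n,Z^n)$ to kill the extra mutual information term and make the Csisz\'ar swap go through.
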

\textit{Remarks}
\begin{itemize}
\item Replacing $a=f(s,v)$ by a general distribution $p(a|s,v)$ does not decrease the minimum in (\ref{eq.RC}). For any joint distribution $p(s)p(s|v)p(a|s,v)$, we can always find a random variable $W$ and a function $f$ such that $W$ is independent of $S,V$ and $Y$, and $A = f(V,W,X)$. Consider $V' = (V,W)$. The Markov condition $ V'-(A,S)-(Y,Z)$ still holds. Thus $H(Y|V',Z)+I(V';S|Z)$ is achievable. Furthermore,
\begin{align*}
&I(V';S|Z)+H(Y|V', Z) \\&= I(V,W ; S|Z) + H(Y|V,W,Z) \\
                &\leq I(V,W; S|Z) + H(Y|V,Z)  \\
                &= I(V;S|Z)+H(Y|V,Z).
\end{align*}
  \item $R(B)$ is a convex function in $B$. 
  \item For each cost function $\Lambda(s,a,y)$, we can replace it with a new cost function involving only $s$ and $a$ by defining $\Lambda'(s,a)= E[\Lambda(S,A,Y)|S=s,A=a]$. Note that $Y$ is distributed as $p(y|s,a)$ given $S=s,A=a$.
  
  BC
  
  \item If we set $\P_{Y|A,S} = 1_{Y=A}$ and $Z = \emptyset$, then we have
  \begin{align*}
  I(V;S) + H(A|V) &= I(V,A;S) - I(A;S|V) + H(A|V) \\
  & = I(V,A;S) \\
  & \ge I(A;S).
  \end{align*}
  The rate-cost function then works out to
  \begin{align*}
  R(B) \ge \min_{\Lambda(A,S) \le B} I(A;S)
  \end{align*}
  for some $p(a|s)$. This recovers the standard lossy source coding result with $A$ being the reconstruction alphabet and $B$ being the desired distortion.
  
  EC
\end{itemize}
Achievability of Theorem \ref{thm:1} involves an interesting observation in the decoding operation, but before proving the theorem, we first state a corollary of Theorem \ref{thm:1}, the case when side information is absent ($Z = \emptyset$). We will also sketch an alternative achievability proof for the corollary, which will serve as a contrast to the achievability scheme for Theorem \ref{thm:1}.

\begin{corollary}[Side information is absent]
If $Z = \emptyset$, then rate-cost function is given by
\begin{align*}
R(B)=\min_{ p(v|s), a=f(s,v): E\Lambda(S,A,Y)\leq B} I(V;S)+H(Y|V)
\end{align*}
for some $p(s,v,a,y) = p(s)p(v|s)1_{\{f(s,v)=a\}}p(y|a,s)$.
\end{corollary}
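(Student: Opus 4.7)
Since the statement is the $Z = \emptyset$ specialization of Theorem~\ref{thm:1}, the converse is immediate from that of Theorem~\ref{thm:1}; the value of stating the corollary separately is in the alternative achievability, which I now sketch as a simple two-stage separate-message scheme that will serve as a useful contrast to the more involved construction needed for Theorem~\ref{thm:1}.

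Fix $p(v|s)$ and $a=f(s,v)$ attaining the minimum. The plan is to use a single $V$-codebook to do two jobs at once: it shapes $Y^n$ to have low conditional entropy given $V^n$, and its indices form the first of two layers of the compressor's lossless description. Concretely, randomly generate $2^{n(I(V;S) + \delta)}$ i.i.d.\ sequences $v^n(m)$ from $\prod_{i=1}^n p(v_i)$, and for each $m$ index the conditionally typical set $\aep(Y \mid v^n(m))$ by some $k \in [1 : 2^{n(H(Y|V) + \delta)}]$. By the covering lemma, with high probability there exists an index $m^{\ast}$ with $(v^n(m^{\ast}), S^n) \in \aep$; the action encoder sets $A_i = f(S_i, V_i(m^{\ast}))$, which by the conditional typicality lemma makes $(v^n(m^{\ast}), S^n, A^n, Y^n) \in \aep$ w.h.p., with expected cost at most $B + o(1)$ by the typical-average lemma. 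The compressor, seeing only $Y^n$, then searches for any $\hat m$ with $(v^n(\hat m), Y^n) \in \aep$ --- existence is automatic because $m^{\ast}$ itself is such a choice --- and transmits $\hat m$ together with the index $\hat k$ of $Y^n$ within $\aep(Y \mid v^n(\hat m))$. The decoder reads off $v^n(\hat m)$ and then $Y^n$, for a total rate of $I(V;S) + H(Y|V) + 2\delta$.

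The principal obstacle is that the compressor observes neither $S^n$ nor $A^n$, and therefore cannot identify the index $m^{\ast}$ actually used by the action encoder. The key realization is that it does not have to: any $\hat m$ for which $(v^n(\hat m), Y^n)$ is jointly typical is as good as $m^{\ast}$ for the purpose of describing $Y^n$ at rate $H(Y|V)$ via the conditional codebook, and the existence of such an $\hat m$ is guaranteed because $m^{\ast}$ is itself a valid choice. This is exactly the simplification afforded by the absence of decoder side information; once $Z^n$ is present at the decoder in Theorem~\ref{thm:1}, transmitting $\hat m$ in the clear at rate $I(V;S)$ becomes wasteful, and one is instead led to bin the $V$-codebook so that the decoder can identify $v^n(\hat m)$ by joint typicality with $(Y^n, Z^n)$ from just the bin index, saving $I(V;Z)$ bits.
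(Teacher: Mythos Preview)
Your achievability for the corollary is correct and essentially identical to the paper's: same $V$-codebook of size $2^{n(I(V;S)+\delta)}$, same two-layer description $(\hat m,\hat k)$, and the same key observation that the compressor need not recover the action encoder's true index $m^\ast$ but only \emph{some} $\hat m$ with $(v^n(\hat m),Y^n)\in\aep$, whose existence is guaranteed by $m^\ast$ itself.

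Your closing remark about how Theorem~\ref{thm:1} is handled, however, is off. You suggest that with $Z^n$ at the decoder one bins the $V$-codebook and has the decoder recover $v^n(\hat m)$ by joint typicality with $Z^n$, saving $I(V;Z)$ bits. The paper explicitly explains that this natural extension \emph{fails}: since the compressor's $\hat m$ may differ from $m^\ast$, and only $v^n(m^\ast)$ is guaranteed to be jointly typical with $Z^n$ (via the Markov chain $V\!-\!S\!-\!Z$), the decoder may be unable to find $v^n(\hat m)$ in its bin. The paper's remedy is different in kind: rather than binning the $V$-codebook, it bins the entire $Y^n$-space into $2^{n(I(V;S|Z)+H(Y|V,Z)+\epsilon)}$ bins, sends only the bin index, and has the decoder search for the unique $\hat y^n$ in the received bin such that $(v^n(l),\hat y^n,z^n)\in\aep$ for \emph{some} $l$, with multiplicity in $l$ explicitly treated as a non-error.
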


\subsection*{Achievability sketch for Corollary 1}
Code book generation: Fix $p(v|s)$ and $f(s,v)$ and $\epsilon>0$.
\begin{itemize}
  \item  Generate $2^{n (I(S;V) + \epsilon)}$ $v^n(l)$ sequences independently, $l \in [1:2^{n(I(V;S) + \e)}]$, each according to $\prod p_V(v_i)$ to cover $S^n$.
  \item For each $V^n$ sequence, the $Y^n$ sequences that are jointly typical with $V^n$ are indexed by $2^{(n( H(Y|V) + \epsilon)}$ numbers.
\end{itemize}
Encoding and Decoding:
\begin{itemize}
  \item  The action encoder looks for a $V^n$ in the code book that is jointly typical with $S^n$ and generates $A_i = f(S_i,V_i), i=1,...,n$.
  \item The compressor looks for a $\Vh^n$ in the codebook that is jointly typical with the channel output $Y^n$ and sends the index of that $\Vh^n$ sequence to the decoder. The compressor then sends the index of $Y^n$ as described in the second part of code book generation.
    \item The decoder simply uses both indices from the compressor to reconstruct $Y^n$.
\end{itemize}

Using standard typicality arguments, we can show that the encoding succeeds with high probability and the probability of error can be made arbitrarily small.

{\em Remark:} Note that the index of $\Vh^n$ is not necessarily equal to $V^n$. That is, the $V^n$ codeword chosen by the action encoder can be different from the $\Vh^n$ codeword chosen by the compressor. But this is not an error event since we still recover the same $Y^n$ even if a different $V^n$ codeword was used.

This scheme, however, does not extend to the case when side information is available at the decoder. The term $H(S|Z,V)$ in Theorem \ref{thm:1} requires us to bin the set of $Y^n$ sequences according to the side information available at the decoder. If we were to extend the above achievability scheme, we would bin the set of $Y^n$ sequences to $2^{n(H(Y|Z,V)+\e)}$ bins. The compressor would find a $\Vh^n$ sequence that is jointly typical with $Y^n$, send the index to the decoder using a rate of $I(V;S|Z) + \e$, and then send the index of the bin which contains $Y^n$. The decoder would then look for the unique $Y^n$ sequence in the bin that is jointly typical with $\Vh^n$ and $Z^n$. Unfortunately, while the $\Vh^n$ codeword is jointly typical with $Y^n$ with high probability, it is not necessarily jointly typical with $Z^n$, since $\Vh^n$ may not be equal to $V^n$ ($V^n$ is jointly typical with $Z^n$ with high probability as $V^n$ is jointly typical with $S^n$ with high probability and $V-S-Z$). One could try to overcome this problem by insisting that the compressor finds the same $V^n$ sequence as the action encoder, but this requirement imposes additional constraints on the achievable rate.

Instead of requiring that the compressor finds a jointly typical $V^n$ sequence, we use an alternative approach to prove Theorem \ref{thm:1}. We simply bin the set of all $Y^n$ sequences to $2^{n(I(V;S|Z)+H(Y|Z,V)+\e)}$ bins and send the bin index to the decoder. The decoder looks for the \textit{unique} $Y^n$ sequence in bin $M$ such that $(V^n(l), Y^n, Z^n)$ are jointly typical for \textit{some} $l \in [1:2^{n(I(V;S) + \e)}]$. Note that there can more than one $V^n(l)$ sequence which is jointly typical with $(Y^n, Z^n)$, but this is not an error event as long as the $Y^n$ sequence in bin $M$ is unique. We now give the details of this achievability scheme.

\subsection*{Proof of achievability for Theorem \ref{thm:1}}
\subsection*{Codebook generation}
\begin{itemize}
\item Generate $2^{n(I(V;S) + \d(\e))}$ $V^n$ codewords according to $\prod_{i=1}^n p(v_i)$
\item For the entire set of possible $Y^n$ sequences, bin them uniformly at random to $2^{nR}$ bins, where $R > I(V;S) - I(V;Z) + H(Y|Z,V)$, $\Bc(M)$.
\end{itemize}
\subsection*{Encoding}
\begin{itemize}
\item Given $s^n$, the encoder looks for a $v^n$ sequence in the codebook such that $(v^n, s^n) \in \aep$. If there is more than one, it randomly picks one from the set of typical sequences. If there is none, it picks a random index from $[1:2^{nI(V;S)+ \d(\e)}]$.
\item It then generates $a^n$ according to $a_i = f(v_i,s_i)$ for $i\in [1:n]$.
\item At the second encoder, it takes the output $y^n$ sequences and sends out the bin index $M$ such that $y^n \in \Bc(M)$.
\end{itemize}
\subsection*{Decoding}
\begin{itemize}
\item The decoder looks for the \textit{unique} $\yh^n$ sequence such that $(v^n(l), \yh^n, z^n) \in \aep$ for \textit{some} $l \in [1:2^{n(I(V;S))}]$ and $\yh^n \in \Bc(M)$. If there is none or more than one, it declares an error.
\end{itemize}
\subsection*{Analysis of probability of error}
Define the following error events
\begin{align*}
\Ec_0 &:= \{(V^n(L), Z^n, Y^n) \notin \aep\}, \\
\Ec_l &:= \{(V^n(l), Z^n, \Yh^n) \in \aep \\
& \qquad \mbox{ for some } \Yh^n \neq Y^n, \Yh^n \in \Bc(M)\}.
\end{align*}

By symmetry of the codebook generation, it suffices to consider $M = 1$. The probability of error is upper bounded by
\begin{align*}
\P(\Ec) \le \P(\Ec_0) + \sum_{l=1}^{2^{n(I(V;S) + \d(\e))}} \P(\Ec_l).
\end{align*}
$\P(\Ec_0) \to 0$ as $n \to \infty$ following standard analysis of probability of error. It remains to analyze the second error term. Consider $\P(\Ec_l)$ and define $\Ec_l(V^n, Z^n): = \{(V^n(l), Z^n, \Yh^n) \in \aep \mbox{ for some } \Yh^n \neq Y^n, \Yh^n \in \Bc(1)\}$. We have{\allowdisplaybreaks
\begin{align*}
\P(\Ec_l) &= \P(\Ec_l(V^n,Z^n)) \\
& = \sum_{(v^n, z^n) \in \aep} \P(V^n(l) = v^n, Z^n = z^n)\P(\Ec_l(v^n,z^n)|v^n, z^n) \\
& = \sum_{(v^n, z^n) \in \aep} \left( \P(V^n(l) = v^n, Z^n = z^n)\mathbf{.} \right.\\
&\left. \qquad  \sum_{y^n}\P(Y^n = y^n|v^n, z^n)\P(\Ec_l(v^n,z^n)|v^n, z^n, y^n)\right) \\
& \stackrel{(a)}{\le}  \sum_{(v^n, z^n) \in \aep} \left( \P(V^n(l) = v^n, Z^n = z^n)\mathbf{.} \right.\\
&\left. \qquad \sum_{y^n}\P(Y^n = y^n|v^n, z^n) 2^{n(H(Y|Z,V) + \d(\e) -R)}\right) \\
& \stackrel{(b)}{=} \sum_{(v^n,z^n) \in \aep} \left(\P(V^n(l) = v^n)\P(Z^n = z^n) \mathbf{.} \right. \\
&\left. \qquad \qquad \qquad \qquad  2^{n(H(Y|Z,V) + \d(\e) -R)}\right) \\
& \le \left(2^{n(H(V,Z) + \d(\e))}2^{-n(H(V) - \d(\e))}2^{-n(H(Z) - \d(\e))}\mathbf{.}\right.\\
 &\left. \qquad \qquad 2^{n(H(Y|Z,V) + \d(\e) -R)}\right)\\
& = 2^{n(H(Y|V,Z) - I(V;Z)  -R  -4\d(\e))}.
\end{align*}}
$(a)$ follows since the set of $Y^n$ sequences are binned uniformly at random independent of other $Y^n$ sequences, and the fact that there are at most $2^{n(H(Y|Z,V)+ \d(\e))}$ $Y^n$ sequences which are jointly typical with a given typical $(v^n,z^n)$. $(b)$ follows from the fact that the codebook generation is independent of $(S^n,Z^n)$. Therefore, for any fixed $l$, $V^n(l)$ is independent of $Z^n$. Hence, if $R \ge I(V;S) - I(V;Z) + H(Y|Z,V) + 6\d(\e)$,
\begin{align*}
\sum_{l=1}^{2^{n(I(V;S) + \d(\e))}} \P(\Ec_l)  \le 2^{-n\d(\e)} \to 0,
\end{align*}
as $n \to \infty$.


We now turn to the proof of converse for Theorem \ref{thm:1}

\subsection*{Proof of converse for Theorem \ref{thm:1}}
Given a $(n,2^{nR})$ code for which the probability of error goes to zero with $n$ and satisfies the cost constraint, define $V_i = (Z^{n \backslash i }, S_{i+1}^n, Y^{i-1})$, we have
{\allowdisplaybreaks
\begin{align*}
   &nR  \\
   &\geq  H(M|Z^n)   \\
   &= H(M,Y^n|Z^n) - H(Y^n|M, Z^n)   \\
   &\stackrel{(a)}{=} H(M,Y^n|Z^n) - n\epsilon_n \\
   &= H(Y^n|Z^n) - n\epsilon_n\\
   &= \sum_{i=1}^n H(Y_i|Y^{i-1}, Z^n )- n\epsilon_n \\
   & =  \sum_{i=1}^n H(Y_i| Y^{i-1},S_{i+1}^n, Z^n) \\
   & \quad +\sum_{i=1}^n I(Y_i; S_{i+1}^n |Y^{i-1}, Z^n)- n\epsilon_n \\
   &\stackrel{(b)}{=} \sum_{i=1}^n H(Y_i| Y^{i-1},S_{i+1}^n, Z^n) \\
   &\quad  +\sum_{i=1}^n I(Y^{i-1}; S_i |S_{i+1}^n, Z^n)- n\epsilon_n \\
    & \stackrel{(c)}{=}\sum_{i=1}^n H(Y_i| Y^{i-1},S_{i+1}^n,Z^n) \\
    &\quad + \sum_{i=1}^nI(Y^{i-1}, S_{i+1}^n, Z^{n\backslash i}; S_i|Z_i)- n\epsilon_n\\
   & \stackrel{(d)}{=}\sum_{i=1}^n H(Y_i| V_i,Z^i) + \sum_{i=1}^nI(V_i; S_i|Z^i )- n\epsilon_n\\
   &= nH( Y_Q, |V_Q,Q,Z_Q) + nI(V_Q ; S_Q|Q,Z_Q)- n\epsilon_n
\end{align*}}
where (a) is due to Fano's inequality. (b) follows from Csisz\'{a}r Sum. (c) holds because $(S^n,Z^n)$ is an i.i.d source. Note that the Markov conditions, $V_i - (S_i,A_i) - Y_i$ and $V_i - S_i -Z_i$ hold. Finally, we introduce $Q$ as the time sharing random variable, i.e., $Q\sim\mathrm{Unif}[1,...,n]$, and  set $V=(V_Q,Q)$, $Y=Y_Q$ and $S=S_Q$, which completes the proof.

\textit{Remark:} Note that the proof of converse continues to hold even if side information $Z^n$ is available at the compressor. This observation shows that side information $Z^n$ at the compressor does not change the rate-cost tradeoff region in Theorem \ref{thm:1}. 
\subsection{Lossless, causal compression with action}
Our next result gives the rate-cost function for the case of lossless, causal compression with action.

\begin{theorem}[Rate-cost function for lossless, causal case] \label{thm:2} The rate for the compression with action when the state information is causally available at the action encoder is given by
\begin{equation}\label{eq.RC_causal}
R(B)=\min_{ p(v), a=f(s,v): E\Lambda(S,A,Y)\leq B} H(Y|V,Z)
\end{equation}
where the joint distribution is of the form $p(z,s,v,a,y) = p(z,s)p(v)1_{\{f(s,v)=a\}}p(y|a,s)$. The cardinality of $V$ is upper bounded by $|\Sc| + 2$. 
\end{theorem}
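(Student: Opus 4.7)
The plan is to mirror the non-causal proof of Theorem~\ref{thm:1} while tracking the causality constraint, which collapses the covering term. For achievability I use Shannon strategies combined with Slepian--Wolf binning. Fix a $p(v)$ and function $f$ achieving the minimum subject to the cost constraint. Draw a single codeword $v^n$ i.i.d.\ from $p(v)$ and make it available as a shared codebook to the action encoder, the compressor, and the decoder. Because $V$ is independent of $(S,Z)$, the action encoder can causally output $A_i = f(S_i, v_i)$ using only $v_i$ and the currently observed $S_i$. By the law of large numbers the induced empirical joint distribution on $(v^n, S^n, Z^n, A^n, Y^n)$ concentrates on the target single-letter law, so $(v^n, Y^n, Z^n) \in \aep$ w.h.p. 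The compressor bins $\Yc^n$ uniformly into $2^{nR}$ bins; the decoder, knowing $v^n$ and $Z^n$, searches its received bin for the unique $\Yh^n$ jointly typical with $(v^n, Z^n)$. A standard Slepian--Wolf calculation shows $R > H(Y|V,Z) + \d(\e)$ suffices, and the cost constraint is satisfied in the limit by typicality.

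For the converse, Fano's inequality together with the chain rule gives
$$
nR \ge H(Y^n \mid Z^n) - n\epsilon_n = \sum_{i=1}^n H(Y_i \mid Y^{i-1}, Z^n) - n\epsilon_n.
$$
The key choice is $V_i := S^{i-1}$. Since $(S_j, Z_j)$ are i.i.d.\ pairs, $V_i$ is automatically independent of $(S_i, Z_i)$; and since the action encoder is causal, $A_i$ is a deterministic function of $S^i = (V_i, S_i)$, matching the structural requirement $a = f(s,v)$. Given $S^{i-1}$, the sequence $Y^{i-1}$ is a function of channel noise independent of $(S_i, N_i, Z^{n\setminus i})$; and given $(S^{i-1}, Z_i)$ the i.i.d.\ product structure makes $Z^{n\setminus i}$ carry no information about $Y_i$. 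Hence
$$
H(Y_i \mid Y^{i-1}, Z^n) \ge H(Y_i \mid Y^{i-1}, Z^n, S^{i-1}) = H(Y_i \mid S^{i-1}, Z_i) = H(Y_i \mid V_i, Z_i).
$$
Summing over $i$ and introducing a time-sharing variable $Q \sim \mathrm{Unif}[1:n]$, then setting $V = (V_Q, Q)$, $S = S_Q$, $Z = Z_Q$, $A = A_Q$, $Y = Y_Q$, single-letterizes the bound to $R \ge H(Y \mid V, Z)$ with the required joint distribution and cost $\leq B$.

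The main obstacle is the standard balancing act in causal auxiliary-random-variable arguments: $V_i$ must be rich enough that $A_i$ is a function of $(V_i, S_i)$, yet still be independent of the current (and future) state. The choice $V_i = S^{i-1}$ threads this needle because the i.i.d.\ tuple structure gives independence for free while causality of the encoder gives the functional form. Crucially, this choice forces $I(V_i; S_i \mid Z_i) = 0$, which is precisely why the causal formula loses the $I(V;S|Z)$ term appearing in Theorem~\ref{thm:1}. The cardinality bound $|\Vc| \le |\Sc| + 2$ is then a routine application of the support lemma to the optimization over $p(v)$ with the marginal, cost, and entropy constraints.
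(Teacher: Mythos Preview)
Your proof is correct and follows essentially the same approach as the paper: Shannon strategies with a shared $V^n$ sequence plus Slepian--Wolf binning for achievability, and a Fano/chain-rule converse with a past-state auxiliary followed by time sharing. The only notable difference is your choice $V_i = S^{i-1}$ versus the paper's $V_i = (S^{i-1}, Z^{n\setminus i})$: you argue directly that $Z^{n\setminus i}$ carries no information about $Y_i$ once $(S^{i-1},Z_i)$ is fixed, whereas the paper simply absorbs $Z^{n\setminus i}$ into $V_i$ (which is still independent of $(S_i,Z_i)$ by the i.i.d.\ product structure) and drops $Y^{i-1}$ via the Markov chain $Y_i-(S^{i-1},A^{i-1},Z^n)-Y^{i-1}$. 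Both routes are valid and yield the same bound; yours uses a slightly leaner auxiliary at the price of one extra conditional-independence justification.
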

\textit{Achievability sketch}: Here $V$ simply serves as a time-sharing random variable. Fix a $p(v)$ and $f(s,v)$. We first generate a $V^n$ sequence and reveal it to the action encoder, the compressor and the decoder. The encoder generates $A_i = f(S_i,V_i)$. The compressor simply bins the set of $Y^n$ sequences to $2^{n(H(Y|V,Z) + \e)}$ bins and sends the index of the bin which contains $Y^n$. The decoder recovers $Y^n$ by finding the unique $Y^n$ sequence in bin $M$ such that $(V^n,Z^n, Y^n)$ are jointly typical.

\textit{Remark}: Just as in the non-causal case, the achievability is closely related to the channel coding strategy in \cite{GP1980}, our achievability in this section uses the ``Shannon Strategy'' in  \cite{Shannon1958}. In both cases, the optimal channel coding strategy yield the most compressible output when the message rate goes to zero.

\textit{Proof of Converse}: Given a $(n,2^{nR})$ code that satisfies the constraints, define $V_i = (S^{i-1}, Z^{n\backslash i})$. We have
{\allowdisplaybreaks
\begin{eqnarray}
nR &\geq & H(M|Z^n)   \nonumber \\
   &=& H(M,Y^n|Z^n) - H(Y^n|M, Z^n)   \nonumber \\
   &\stackrel{(a)}{=}& H(M,Y^n|Z^n) - n\epsilon_n \nonumber\\
   &=& H(Y^n|Z^n) - n\epsilon_n \nonumber\\
   &=& \sum_{i=1}^n H(Y_i|Y^{i-1}, Z_i, Z^{n\backslash i} )- n\epsilon_n \nonumber\\
   & \geq & \sum_{i=1}^n H(Y_i| Y^{i-1},A^{i-1},S^{i-1}, Z_i, Z^{n\backslash i})- n\epsilon_n \nonumber\\
   &\stackrel{(b)}{=}& \sum_{i=1}^n H(Y_i|A^{i-1},S^{i-1}, Z_i, Z^{n\backslash i})- n\epsilon_n \nonumber\\
   & \stackrel{(c)}{=}&\sum_{i=1}^n H(Y_i| V_i, Z_i)- n\epsilon_n \nonumber\\
   & \stackrel{(d)}{=}& n H(Y_Q| V_Q,Q, Z_Q)- n\epsilon_n \nonumber
\end{eqnarray}
}
where (a) is due to Fano's inequality; (b) follows from the Markov chain $Y_i - (S^{i-1},A^{i-1}, Z^n) - Y^{i-1}$ ; (c) follows since $A^{i-1}$ is a function of $S^{i-1}$. Note that $A_i$ is now a function of $S_i$ and $V_i$. Finally, we introduce $Q$ as the time sharing random variable, i.e., $Q\sim\mathrm{Unif}[1,...,n]$.  Thus, by setting $V=(V_Q,Q)$ and $Y=Y_Q$, we have completed the proof.

\textit{Remark:} Note that the proof of converse continues to hold even if side information $Z^n$ is available at the compressor. This observation shows that side information $Z^n$ at the compressor does not change the rate-cost tradeoff region in Theorem \ref{thm:2}. 

\subsection{Examples}
\subsubsection{No side information} In this subsection, we first consider an example with state sequence $S^n\sim$ i.i.d. Bern($1/2$) and $Z = \emptyset$. We have two actions available, $A=0$ and $A=1$. The cost constraint is on the frequency of action $A=1$, $EA\leq B$. The channel output $Y_i = S_i \oplus A_i \oplus S_{Ni}$ where $\oplus$ is the modulo 2 sum and $\{S_{Ni}\}$ are i.i.d. Bern$(p)$ noise, $p<1/2$.
The example is illustrated in Fig.~\ref{fig.ex1}.
\begin{figure}[ht]{
\psfrag{A}[][][.8]{$A^n$}
\psfrag{Y}[][][.8]{$Y^n$}
\psfrag{o}[][][1.2]{$+$}
\psfrag{A1}[][][.8]{Action}
\psfrag{E1}[][][.8]{Encoder}
\psfrag{C} [][][.8]{$EA\leq B$}
\psfrag{E2}[][][.8]{Compressor}
\psfrag{M}[][][.8]{$M\in$}
\psfrag{R}[][][.8]{$\{1,..,2^{nR}\}$}
\psfrag{Z}[][][.8]{$S_N^n\sim$ i.i.d Bern$(p)$}
\psfrag{S}[][][.8]{$\quad \quad\quad S^n\sim$ i.i.d Bern$(1/2)$}
\centerline{\includegraphics[width=3.3in]{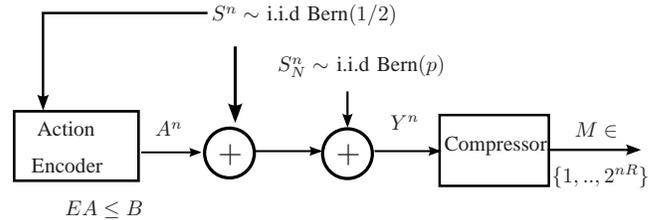}}}
\caption{Binary example with side information $Z = \emptyset$.}
\label{fig.ex1}
\end{figure}

We use the following lemma to simplify the optimization problem in Eq.~(\ref{eq.RC}) applied to the binary example.
\begin{lemma}\label{lemma.noncausal} For the binary example, it is without loss of optimality to have the following constraints when solving the optimization problem of Eq.~(\ref{eq.RC}):
\begin{itemize}
  \item $\mathcal{V}=\{0,1,2\}$, $\Pr(V=0)=\Pr(V=1)=\theta/2$, for some $\theta\in[0,1]$.
  \item The function $a=f(s,v)$ is of the form: $f(s,0)=s$, $f(s,1)=1-s$ and $f(s,2)=0$.
  \item $\Pr(S=0|V=1) = \Pr(S=1|V=0)=\Delta$ and $\Pr(S=0|V=2)=1/2$.
  \item $\Delta\theta\leq B$.
\end{itemize}
 \end{lemma}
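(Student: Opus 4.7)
The plan is to start from an arbitrary feasible $(V,f)$ and apply a sequence of structural reductions, each of which leaves $\Pr(S=1)=1/2$ intact, does not increase $I(V;S)+H(Y|V)$, and does not increase $E\Lambda$, until the code sits in the canonical form described by the lemma.

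Because both $S$ and $A$ are binary, for each fixed $v$ the map $s\mapsto f(s,v)$ must be one of exactly four functions: the identity (type~I), the flip $s\mapsto 1-s$ (type~II), the constant $0$ (type~III), or the constant $1$ (type~IV). Partition $\mathcal{V}$ by type and let $q_v:=\Pr(S=1\mid V=v)$. A direct computation gives $H(Y\mid V=v)=h(p)$ for types~I and~II, $h(q_v(1-2p)+p)$ for types~III and~IV, and cost contributions $q_v,\,1-q_v,\,0,\,1$ respectively. The first reduction eliminates type~IV: replacing $f(s,v)=1$ by $f(s,v)=0$ leaves $H(Y\mid V=v)$ unchanged (via $h(x)=h(1-x)$) and $I(V;S)$ unchanged, while dropping the cost contribution from $\Pr(V=v)$ to $0$.

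The next reduction is symmetrization. Replace $V$ by $V'=(V,B)$ with $B\sim\Br(1/2)$ independent of everything else, and on the $B=1$ branch swap type~I$\leftrightarrow$type~II, keep type~III, and flip the conditional law to $q_{(v,1)}=1-q_v$. Using $h(x)=h(1-x)$, it follows that $H(S\mid V')=H(S\mid V)$, $H(Y\mid V')=H(Y\mid V)$, and both the cost and the marginal $\Pr(S=1)=1/2$ are preserved. After this step, the type-I and type-II masses in $V'$ pair up across the two types with $q$ replaced by $1-q$, and every type-III mass is paired with a mirror type-III mass at $1-q_v$.

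Finally, merge within each type. For types~I and~II the per-value contribution is $\Pr(V=v)\,[h(p)-h(q_v)]$; concavity of $h$ implies that merging masses into one with $q$ equal to the weighted average weakly decreases this contribution, while preserving cost and $S$-marginal. The paired type-I$\cup$type-II masses collapse to $q_0=\Delta$ and $q_1=1-\Delta$ with equal probability $\theta/2$. For type~III the contribution $\Pr(V=v)\,[h(q_v(1-2p)+p)-h(q_v)]$ is a difference of concaves, and this is where symmetrization pays off: each palindromic pair $(q,1-q)$ can be collapsed to a single mass at $q=1/2$, and the change in objective equals $-(1-\theta)\,[h(q(1-2p)+p)-h(q)]$, which is nonpositive because the BSC image $q(1-2p)+p$ is closer to $1/2$ than $q$ and hence $h(q(1-2p)+p)\geq h(q)$. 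The surviving $V$ takes three values with exactly the structure in the lemma statement, and the cost evaluates to $\theta\Delta$, so the original cost bound $E\Lambda\leq B$ forces $\theta\Delta\leq B$. The main obstacle is the type-III merge, since standard Jensen does not apply to a difference of concaves and naive merging across type-III masses can move the objective in the wrong direction; the symmetrization trick is essential because it produces pairs whose merged mass lands at $q=1/2$, where the difference vanishes.
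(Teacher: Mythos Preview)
Your argument is correct and follows the same structural reductions as the paper's proof: partition $\mathcal{V}$ by the four possible binary functions $f(\cdot,v)$, discard type~IV by switching to the zero action, merge within types~I/II by concavity of the binary entropy, and symmetrize so that $\Pr(V=0)=\Pr(V=1)$ with mirrored conditionals. The only real difference is the order of operations. The paper first merges types~I and~II to singletons and records that each $v$ of type~III contributes a nonnegative amount $H(S\oplus S_N\mid V=v)-H(S\mid V=v)$; it then performs a single ``last'' step that simultaneously symmetrizes the two singletons and sets $\Pr(S=0\mid V'=2)=1/2$, relying on the nonnegativity observation to justify collapsing the type-III contribution to zero. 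You instead symmetrize first via the independent fair coin $B$, which forces every type-III mass to appear in a palindromic pair $(q,1-q)$, and then collapse each pair to $q=1/2$; this makes the type-III reduction completely explicit and avoids having to separately argue that the $S$-marginal survives the collapse. Both orderings land at the same canonical form, and your comment that naive Jensen fails for the difference-of-concaves in type~III (so the palindromic pairing is what rescues the merge) is exactly the content of the paper's nonnegativity remark, just packaged differently.

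One cosmetic point: your formula for the change under a type-III merge, written as $-(1-\theta)\,[h(q(1-2p)+p)-h(q)]$, is stated as if there were a single $q$; in general there is one such term per original type-III value $v$, with weight $p(v)$ rather than $1-\theta$. The sign argument is unaffected, since each term is nonpositive by $h(q(1-2p)+p)\ge h(q)$.
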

Note that the constraints guarantee that $\Pr(S=0)=\Pr(S=1)=1/2$.

\begin{proof}
See Appendix.
\end{proof}
Using Lemma~\ref{lemma.noncausal}, we can simplify the objective function in Eq.~(\ref{eq.RC}) in the following way:{\allowdisplaybreaks
\begin{eqnarray}
&\quad& H(Y|V) + I(V;S)\nonumber\\
&=& H(Y|V) - H(S|V) + H(S)\nonumber\\
&=& H(S\oplus A \oplus S_N |V) - H(S|V) +1 \nonumber\\
&=&\frac{\theta}{2} \left( H( 0 \oplus S_N |V=0) - H(\Delta)\right)\nonumber\\
 &\quad&+\frac{\theta}{2} \left\{H( 1 \oplus S_N |V=1) - H(\Delta)\right\}\nonumber\\
 &\quad& +(1-\theta)\left\{H(S\oplus S_N|V=2)-1\right\}+1\nonumber\\
&=& \theta\left( H_2(p) - H(\Delta) \right)+1\nonumber
\end{eqnarray}}
where $H_2(\cdot)$ is the binary entropy function, i.e., $H_2(\delta) = -\delta\log\delta-(1-\delta)\log(1-\delta)$.

\begin{eqnarray}
R(B) &=&\min_{\theta\in[2B,1],\quad\theta\Delta\leq B} \theta\left( H_2(p) - H(\Delta) \right)+1\nonumber\\
&=&1+\min_{\Delta\in[B,1/2]} \frac{B}{\Delta}\left( H_2(p) - H_2(\Delta) \right)\nonumber\\
&=&1-B\max_{\Delta\in[B,1/2]} \frac{ H_2(\Delta) - H_2(p)}{\Delta}\nonumber\\
&=&\left\{
     \begin{array}{ll}
       1-B\frac{H(b^*)-H_2(p)}{b^*}, & \text{ if } 0\leq B < b^* \\
       1-H_2(B)+H_2(p), & \text{ if } b^* \leq B \leq 1/2
     \end{array}
   \right.
\end{eqnarray}
where $b^*$ is the solution of the following function:
\begin{equation}
\frac{H_2(b) - H_2(p)}{b} = \frac{dH_2}{db},\quad b\in[0,1/2]
\end{equation}
which is illustrated in Fig.~\ref{fig.entropy}.
\begin{figure}[h]{
\psfrag{A}[][][.8]{\ $A^n$}
\psfrag{p0}[][.8]{$(p,H(p))$}
\psfrag{bs}[][.8]{$b^*$}
\psfrag{p}[][.8]{$p$}
\psfrag{dH}[][.8]{$\quad\quad H(b^*)-H(p)$}
\psfrag{db}[][.8]{$b^*$}
\psfrag{f0}[][.7]{$\frac{H_2(b^*) - H_2(p)}{b^*} = \frac{dH_2}{db}\big|_{b=b^*}$}
\centerline{\includegraphics[width=3.5 in]{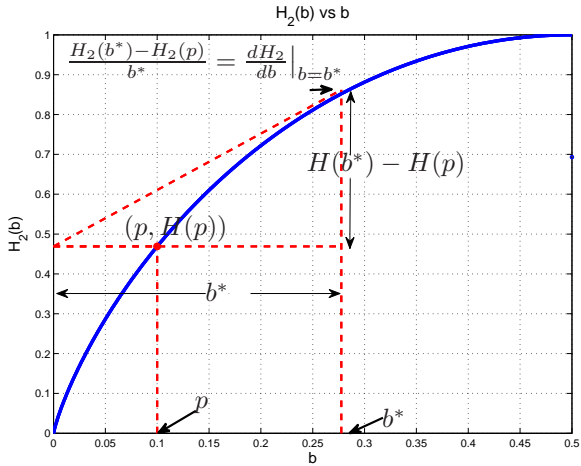}}}
\caption{The threshold $b^*$ solves $\frac{H_2(b) - H_2(p)}{b} = \frac{dH_2}{db}$,$\quad b\in[0,1/2]$}
\label{fig.entropy}
\end{figure}

Now let us shift our attention to the causal case of the binary example, i.e., $S_i$ is only causally available at the action encoder.
\begin{lemma}\label{lemma.causal} For the causal case of the binary example, it is without loss of optimality to have the following constraints when solving the optimization problem in Eq.~(\ref{eq.RC_causal}):
\begin{itemize}
  \item $\mathcal{V}=\{0,1\}$, $\Pr(V=0)=\theta$, for some $\theta\in[0,1]$.
  \item The function $a=f(s,v)$ is of the form: $f(s,0)=s$, $f(s,1)=0$.
  \item $\frac{\theta}{2}\leq B$.
\end{itemize}
 \end{lemma}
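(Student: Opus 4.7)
The plan is to reduce the optimization problem in Eq.~(\ref{eq.RC_causal}) for the binary example to the promised two-letter, one-parameter form. First, I would invoke the causal Markov structure established in Theorem~\ref{thm:2}, which forces $V$ to be independent of $S$ and gives $A = f(S,V)$. For each realization $v$, the map $s \mapsto f(s,v)$ from $\{0,1\}$ to $\{0,1\}$ is one of only four functions: the identity $s$, the complement $1-s$, the constant $0$, or the constant $1$. Any two letters $v,v'$ inducing the same such function produce identical conditional distributions of $(A,Y)$ given $V$, so I can merge them and assume without loss of generality that $|\mathcal{V}| \le 4$, with each surviving letter labeled by one of these four functions.

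Next, for each of the four labels I would compute $H(Y \mid V=v)$ and $\E[A \mid V=v]$ directly from $Y = S \oplus A \oplus S_N$ using $S \sim \Bern(1/2)$ and $S_N \sim \Bern(p)$ independent of $(S,V)$. The identity and complement labels both yield $Y \sim \Bern(p)$ conditionally on $V=v$, giving $H(Y|V=v) = H_2(p)$ and $\E[A|V=v] = 1/2$; the two constants both yield $Y \sim \Bern(1/2)$, giving $H(Y|V=v) = 1$, with cost $0$ or $1$ respectively.

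The final step is elimination. Identity and complement have identical rate and cost contributions, so they can be merged into a single letter which I will call $v=0$ with $f(s,0)=s$. The two constants share the rate contribution $1$, but constant $1$ strictly dominates constant $0$ in cost; shifting any mass from constant $1$ to constant $0$ weakly decreases $\E[A]$ without changing $H(Y|V)$, so one may assume all constant-labeled letters use $f(s,1)=0$. Setting $\theta = \Pr(V=0)$ then gives $\E[A] = \theta/2$, which yields the stated constraint $\theta/2 \le B$.

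I do not expect a genuine obstacle; the one thing to verify is that these merges preserve the independence of $V$ and $S$, but this is automatic since we are only relabeling letters of $V$ and never altering any conditional distribution given $V$. The key structural observation making the reduction clean is the symmetry of the binary additive-noise channel, which renders complementing the action equivalent to the identity in both rate and cost, while the zero-cost constant strictly beats the unit-cost constant at the same rate.
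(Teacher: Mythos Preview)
Your proposal is correct and follows essentially the same approach as the paper: partition $\mathcal{V}$ according to the four possible maps $s\mapsto f(s,v)$, use independence of $V$ and $S$ to compute $H(Y|V=v)$ and $\E[A|V=v]$ for each type, then merge identity with complement (same rate $H_2(p)$, same cost $1/2$) and eliminate the constant-$1$ label in favor of constant-$0$ (same rate $1$, smaller cost). The only cosmetic difference is that the paper groups first and observes the objective depends only on the total mass in each group, whereas you merge within groups first; both routes yield the same two-letter parametrization with $\E[A]=\theta/2$.
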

 \begin{proof}
 See Appendix.
 \end{proof}
{\allowdisplaybreaks
\begin{eqnarray}
&\quad&R(B)\nonumber\\
&=&\min H(Y|V)\nonumber\\
&=& \min_{\theta \in [0,1],\frac{\theta}{2}\leq B}  \theta H( Y|V=0)  + (1-\theta) H(Y|V=1)\nonumber\\
&=& \min_{\theta \in [0,1],\frac{\theta}{2}\leq B}  \theta H( Z|V=0)  + (1-\theta) H(S\oplus Z|V=1)\nonumber\\
&=& \min_{\theta \in [0,1],\frac{\theta}{2}\leq B}  \theta H_2(p) + (1-\theta)\nonumber\\
&=&\left\{
     \begin{array}{ll}
       2BH_2(p) + (1-2B), & \hbox{$0\leq B\leq 1/2$;} \nonumber\\
       H_2(p), & \hbox{$1/2\leq B$.}
     \end{array}
   \right.
\end{eqnarray}}
For the binary example with $p=0.1$, we plot the rate-cost function $R(B)$ for both cases in Figure \ref{fig.RB}. Note that when $S$ is only causally known at the action encoder, the optimum lossless compression scheme amounts to time sharing between compressing the noise $S_N$ losslessly and compressing $S$ losslessly. The optimum time sharing factor is determined by the cost $B$.
\begin{figure}[h]{
\centerline{\includegraphics[width=3.7in]{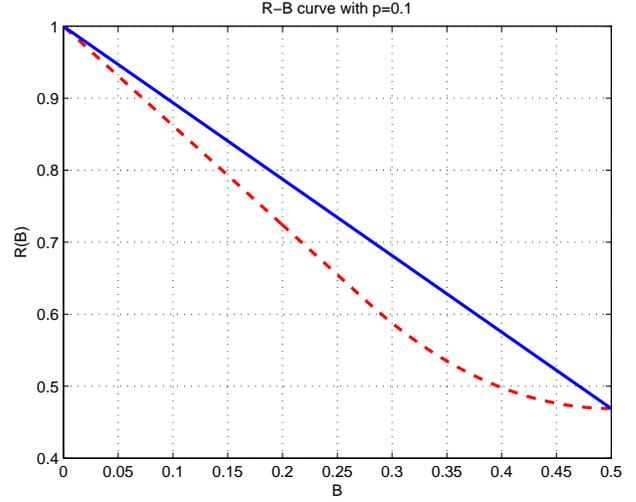}}}
\caption{Comparison between the non-causal and causal rate-cost functions. The parameter of the Bernoulli noise is set at 0.1.}
\label{fig.RB}
\end{figure}
{
\subsubsection{Erased side information} We now turn to the case when side information is available at the decoder only. We extend our setting in the previous example by letting $Z$ be an erased version of $S$. That is,
\begin{align*}
Z = \left\{\begin{array}{lcc}S & \mbox{w.p.} & 1 - p_e \\
e &\mbox{w.p.} & p_e\end{array}\right.
\end{align*}
In this case, the rate cost function is related to the case when no side information is available at the decoder in a simple manner. We first note the following
\begin{align}
\P(V|Z = e) &= \sum_{S} \P(S,V|Z = e) \nonumber\\
& = \sum \P(S|Z = e) \P(V|S,Z = e) \nonumber\\
&= \sum \P(S)\P(V|S) \nonumber\\
&= \P(V). \label{probeqn:1}
\end{align}
The third equality follows from the Markov Chain $V-S-Z$. Furthermore,
\begin{align}
\P(Y|V, Z = e) &= \sum_S p(S|Z = e, V) p(Y|V,S,Z = e) \nonumber\\
&= \sum P(S|V)\P(Y|V,S) \nonumber\\
& = \P(Y|V). \label{probeqn:2} 
\end{align}
The second equality follows from the Markov chain $Z - (S,V) - Y$ and $\P(S|Z=e, V) = \P(S,V,Z = e)/\P(V,Z = e) = \P(S,V)\P(Z = e)/(\P(Z = e)\P(v)) = \P(S|V)$. We now consider the rate-cost expression when $S$ is non-causally known at the action encoder.
\begin{align*}
R(B)_{\rm NC} &= \min I(V;S|Z) + H(Y|V,Z) \\
& \stackrel{(a)}{=} \min \, p_e I(V;S) + p_e H(Y|V) + (1- p_e) H(Y|V,S) \\
&= p_e \min (I(V;S) + H(Y|V) )+ (1- p_e) H_2(p).
\end{align*}
$(a)$ follows the following observations: (i) when $Z = \e$, $\P(V|Z = e) = \P(V)$ by \eqref{probeqn:1}, so $H(V|Z = e) = H(V)$ and from the Markov Chain $V-S-Z$ and $\P(S|Z = e) = \P(S)$, $H(V|S,Z = e) = H(V|S)$. Hence, $I(V;S|Z = e) = I(V;S)$; and (ii) from \eqref{probeqn:2}, $H(Y|V, Z = e) = H(Y|V)$. The last equality follows from when $Z = S$, $H(Y|Z = S, V) = H(Y|S,V)$. Since $A = f(S,V)$, $S_N \sim \Bern(p)$ independent of $(S,V,Z)$ and $Y = S \oplus A \oplus S_N$, $H(Y|S,V) = H_2(p)$.

As checks, note that when $p_e = 1$, which corresponds to the no side information case, the rate-cost function reduces to that in Corollary 1, and when $p_e = 0$, the rate-cost function reduces to $H_2(p)$, which corresponds to the minimum rate required when $S$ is also available at the decoder. We now turn to the case when $S$ is only causally known at the action encoder. Here, we have
\begin{align*}
R(B)_{\rm C} &= \min H(Y|V,Z) \\
& = p_e \min H(Y|V) + (1-p_e) H_2(p).
\end{align*}
The rate-cost tradeoff is shown in figure~\ref{fig.RBE}.

\begin{figure}[h]{
\centerline{\includegraphics[width=3.7in]{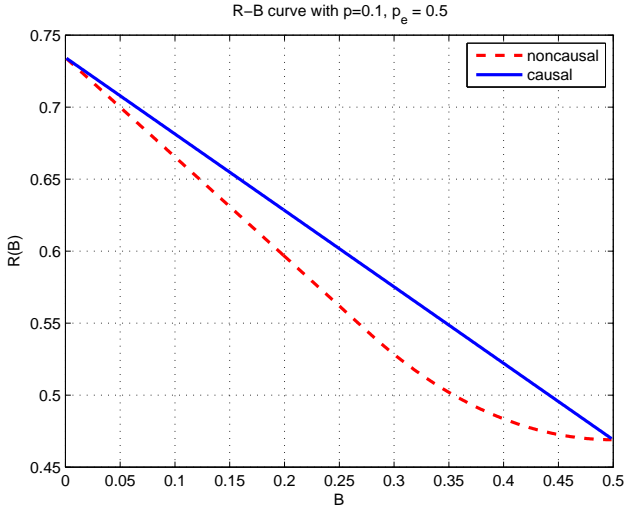}}}
\caption{Comparison between the non-causal and causal rate-cost functions with erased side information at the decoder. The parameter of the Bernoulli noise is set at 0.1 and the parameter of the erased side information is set at 0.5.}
\label{fig.RBE}
\end{figure}

}


%
\section{Lossy compression with actions} \label{sect:4}
{ In this section, we extend our setup to the lossy case. We first consider the case when side information is available at both the compressor and the decoder. We characterize the rate-distortion-cost tradeoff region for the case when $S^n$ is causally known to the action encoder. The case when $S^n$ is non-causally know at the action encoder is more involved. We give an achievable rate-distortion-cost region for that setting. We then move on to the case when side information $Z$ is available at the decoder only, and $S$ is non-causally known at the action encoder. We discuss two achievability schemes for this setting.
\subsection{Side information known at compressor and decoder}}
\begin{theorem} \label{thm:3} The rate-cost-distortion function for the case with causal state information and side information available to both the compressor and the decoder is given by
\begin{equation}\label{eq.causal.lossy}
R(B,D)=\min_{ a=f(s,v): E\Lambda(S,A,Y)\leq B, Ed(Y,\hat{Y})\leq D} I(Y;\Yh|V,Z)
\end{equation}
where the joint distribution is of the form $p(z,s,v,a,y,\hat{y}) = p(s,z)p(v)1_{\{a=f(s,v)\}}p(y|a,s)p(\yh|y,v,z)$. The cardinality of $V$ is upper bounded by $|\Sc| +2$.
\end{theorem}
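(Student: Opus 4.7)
My plan is to prove achievability by combining the Shannon--strategy scheme from Theorem~\ref{thm:2} with a standard conditional rate--distortion code, and to prove the converse by adapting the single-letterization used there, now with an extra step that introduces the reconstruction $\hat Y_i$.

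For achievability, I fix $p(v)$, a deterministic map $a=f(s,v)$ and a test channel $p(\hat y|y,v,z)$ meeting the cost and distortion constraints, and generate a single $V^n$ sequence i.i.d.\ $\sim p(v)$ that I make available as common randomness to the action encoder, the compressor, and the decoder. The action encoder then outputs $A_i=f(S_i,V_i)$, which is admissible because it is a symbol-by-symbol function of the causally observed $S_i$ and the shared $V_i$. Since $Z^n$ is available at \emph{both} ends, no Wyner--Ziv binning is needed: for each realization of $(v^n,z^n)$ I draw $2^{n(I(Y;\hat Y|V,Z)+\epsilon)}$ reconstruction codewords i.i.d.\ $\sim \prod_i p(\hat y_i|v_i,z_i)$, have the compressor apply joint-typicality encoding, and the decoder simply looks up the codeword. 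The covering lemma guarantees success with high probability, and the typical average lemma yields the required cost and distortion.

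For the converse, I fix a good $(n,2^{nR})$ code, set $V_i=(S^{i-1},Z^{n\setminus i})$, and write
\begin{align*}
nR &\ge H(M|Z^n) \ge I(Y^n;M|Z^n) \\
&= \sum_{i=1}^n \bigl[H(Y_i|Y^{i-1},Z^n) - H(Y_i|M,Y^{i-1},Z^n)\bigr].
\end{align*}
The first term is bounded below exactly as in the converse of Theorem~\ref{thm:2}: since $A^{i-1}$ is a function of $S^{i-1}$ and the channel noise is memoryless, the Markov chain $Y_i-(S^{i-1},Z^n)-Y^{i-1}$ gives $H(Y_i|Y^{i-1},Z^n)\ge H(Y_i|V_i,Z_i)$. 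For the second term I first add $S^{i-1}$ to the conditioning (allowed, since this only decreases entropy), then drop $(M,Y^{i-1})$ while keeping $\hat Y_i$, which is a deterministic function of $(M,Z^n)\subseteq (M,Y^{i-1},S^{i-1},Z^n)$; this yields $H(Y_i|M,Y^{i-1},Z^n)\le H(Y_i|\hat Y_i,V_i,Z_i)$. Subtracting gives $I(Y_i;\hat Y_i|V_i,Z_i)$ per index.

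I would close with the standard time-sharing argument: introduce $Q\sim\mathrm{Unif}[1{:}n]$ and set $V=(V_Q,Q)$, $Y=Y_Q$, $\hat Y=\hat Y_Q$, $S=S_Q$, $Z=Z_Q$. The i.i.d.\ property of the source gives $V\perp(S,Z)$, the causal structure gives $A=f(S,V)$, and convexity of distortion/cost carries the constraints through the averaging. The cardinality bound $|\mathcal V|\le|\mathcal S|+2$ follows by a routine application of the convex cover method (one letter to preserve $p_S$, one each for the cost and distortion constraints, and one for the objective). The main obstacle I anticipate is keeping the directions straight in the chain of inequalities on $H(Y_i|M,Y^{i-1},Z^n)$; the key trick is to \emph{add} $S^{i-1}$ to the conditioning before dropping $(M,Y^{i-1})$, so that $\hat Y_i$ emerges naturally without violating monotonicity of conditional entropy.
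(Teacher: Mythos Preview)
Your achievability is essentially the paper's: shared $V^n$ as time sharing, conditional rate--distortion code given $(V^n,Z^n)$. Fine.

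The converse, however, has a genuine gap in the treatment of the second entropy term. You need to \emph{upper}-bound $H(Y_i\mid M,Y^{i-1},Z^n)$, since it is subtracted. Your chain does the following:
\[
H(Y_i\mid M,Y^{i-1},Z^n)\ \ge\ H(Y_i\mid M,Y^{i-1},S^{i-1},Z^n)
\ =\ H(Y_i\mid M,Y^{i-1},S^{i-1},Z^n,\hat Y_i)\ \le\ H(Y_i\mid \hat Y_i,S^{i-1},Z^n).
\]
The first step (adding $S^{i-1}$) is a $\ge$ and the last (dropping $M,Y^{i-1}$) is a $\le$; they point in opposite directions and do not compose to the claimed $\le$. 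The inequality you assert is in fact false in general: take $n=2$, $Z=\emptyset$, noiseless channel $Y=A$, causal actions $A_1=0$, $A_2=S_1$, and $M=\emptyset$ with $\hat Y_2$ a constant. Then $Y_1\equiv 0$, $Y_2=S_1$, so $H(Y_2\mid M,Y_1)=H(S_1)=1$ while $H(Y_2\mid \hat Y_2,S_1)=H(S_1\mid S_1)=0$, contradicting $H(Y_2\mid M,Y_1)\le H(Y_2\mid \hat Y_2,V_2)$. (The \emph{overall} mutual-information bound still happens to hold in this example, but your term-by-term argument does not establish it.)

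The paper sidesteps this by choosing the auxiliary differently: it sets $V_i=(Y^{i-1},Z^{n\setminus i})$ rather than $(S^{i-1},Z^{n\setminus i})$. Then no conditioning needs to be added or removed:
\[
I(M;Y_i\mid Y^{i-1},Z^n)=I(M;Y_i\mid V_i,Z_i)\ \ge\ I(\hat Y_i;Y_i\mid V_i,Z_i),
\]
since $\hat Y_i$ is a function of $(M,Z^n)\subseteq(M,V_i,Z_i)$. The price is that with this $V_i$ the action $A_i$ is not literally a function of $(S_i,V_i)$, so the paper closes with a functional-representation step: for any induced $p(a\mid s,v)$ one introduces an independent $U$ with $A=f(S,V,U)$, sets $\tilde V=(V,U)$, and checks $I(\hat Y;Y\mid \tilde V,Z)\le I(\hat Y;Y\mid V,Z)$, so restricting to deterministic $a=f(s,v)$ does not increase the minimum. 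Your choice of $V_i$ made the structural constraint $A=f(S,V)$ immediate but broke the data-processing step; the paper's choice does the reverse, and the structural issue is the one that can be repaired.
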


\textit{Achievability sketch}: The achievability is straightforward, with $V^n$ acting as the time sharing random variable known to all parties. We first generate $V^n \sim \prod_{i=1}^n p(v_i)$. For each $z^n$ sequence, we generate $2^{n(I(\Yh;Y|V,Z)+ \e)}$ $\Yh^n$ sequences according to $\prod_{i=1}^n p(\yh_i|v_i,z_i)$. The action encoder simply generates $a^n$ according to $a_i = f(v_i,s_i)$ for $i \in [1:n]$. The compressor looks for a $\yh^n$ sequence such that $(\yh^n, y^n, v^n, z^n) \in \aep$. It then sends out this description to the decoder which reconstructs $Y^n$ as $\yh^n$. Since we have $2^{n(I(\Yh;Y|V,Z)+ \e)}$ $\Yh^n$ sequences, the probability of not finding a jointly typical $\Yh^n$ sequence goes to zero as $n \to \infty$.

\textit{Converse}: Given a $(n, 2^{nR})$ code satisfying the cost and distortion conditions, we have
{\allowdisplaybreaks
\begin{align*}
nR &\geq  H(M|Z^n)   \\
   &\geq I(M;Y^n|Z^n)    \\
   & =  \sum_{i=1}^n I(M; Y_i|Y^{i-1}, Z^{n\backslash i}, Z^i) \\
   & \stackrel{(a)}{=}  \sum_{i=1}^n I(M;Y_i |V_i, Z_i)\\
   & \stackrel{(b)}{\geq}  \sum_{i=1}^n I(\Yh_i ; Y_i |V_i,Z_i) \\
   & \stackrel{(c)}{=} n I(\hat{Y}_Q; Y_Q |V_Q, Q, Z_Q) 
\end{align*}
}
where in (a) we set $V_i = (Y^{i-1}, Z^{n \backslash i})$. (b) holds from the fact that $\Yh_i$ is a function of $M$ and $Z^n$. In (c) we introduce $Q$ as the time sharing random variable, i.e., $Q\sim\mathrm{Unif}[1,...,n]$. Thus, by setting $V=(V_Q,Q)$ and $Y=Y_Q$ and noting that Note that $V$ is independent of $(S, Z)$, we have shown that $R(B,D)\geq I(\hat{Y},Y|V,Z)$ for some $p(v)p(s|v)p(z|s)p(a|s,v)p(y|a,s)p(\yh|y,a,s,z,v)$. It suffices to restrict attention to the joint distribution stated in Theorem \ref{thm:3} because of the following observations.
\begin{itemize}
  \item $p(\hat{y}|a,y,v,s,z)$ can be restricted to $p(\yh|y,v,z)$ since the mutual information term $I(Y;\hat{Y}|V,Z)$ and the distortion constraint only depend on the marginal distribution $p(\hat{y},y,v,z)$.  
  \item $p(a|s,v)$ can be restricted to $a = f(s,v)$ since we can always find an independent random variable $U$ such that $p(a |s,v) = p(s,v)p(u)1_{a = f(s,v,u)}$. Now define $\Vt = (V,U)$ and $p(\yh|y,\vt, z) = p(\yh|y,\vt,z)$. Note that $p(s,z)p(\vt)1_{\{a=f(s,\vt)\}}p(y|a,s)p(\yh|y,\vt,z) = p(s,z)p(v)p(a|s,v)p(y|a,s)p(\yh|y,v,z)$. Since the joint distribution remains unchanged, the distortion and cost are preserved. As for the rate, we note that
\begin{align*}
I(\Yh;Y|\Vt,Z) & = H(\Yh|\Vt, Z) - H(\Yh|\Vt, Z, Y) \\
& = H(\Yh|V, U, Z) - H(\Yh|V, U,Z, Y) \\
& \le H(\Yh|V, Z) - H(\Yh|V,Z, Y) \\
& = I(\Yh;Y|V,Z). 
\end{align*}
\end{itemize}

Our next Theorem gives an upper bound on the rate-distortion-cost tradeoff for the case when the state information is known non-causally at the action encoder and $Z^n$ is present at both the compressor and the decoder.
\begin{theorem}An upper bound on the rate-distortion-cost function for the case with non-causal state information and side information at both the compressor and decoder is given by
\begin{equation}\label{eq.noncausal.lossy}
R(B,D)\leq \min_{E\Lambda(S,A,Y)\leq B,Ed(Y,\hat{Y})\leq D} I(V;S|Z)+I(\hat{Y};Y|V,Z)
\end{equation}
where the joint distribution is of the form $p(s,v,a,y,\hat{y},z) = p(s,z)p(v|s)1_{\{f(s,v)=a\}}p(y|a,s)p(\hat{y}|y,v,z)$.
\end{theorem}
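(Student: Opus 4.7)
The plan is to prove this upper bound by combining the covering-and-binning scheme used in the achievability of Theorem~\ref{thm:1} with an additional conditional covering layer for the lossy reconstruction $\Yh^n$, exploiting that $Z^n$ is available at both compressor and decoder. The auxiliary sequence $V^n$ is conveyed to the decoder via Wyner--Ziv binning at rate $I(V;S|Z)$, and the reconstruction $\Yh^n$ is then described at the conditional rate $I(\Yh;Y|V,Z)$, for a total of $I(V;S|Z)+I(\Yh;Y|V,Z)$.

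For the codebook, I would fix $p(v|s)$, $a=f(s,v)$, and $p(\yh|y,v,z)$ attaining the minimum in~\eqref{eq.noncausal.lossy}, generate $2^{n(I(V;S)+\d(\e))}$ codewords $v^n(l)$ i.i.d.\ from $\prod p(v_i)$, and partition them uniformly into $2^{nR_1}$ bins with $R_1 = I(V;S|Z)+2\d(\e)$. For every pair $(l,z^n)$, I would independently generate $2^{n(I(\Yh;Y|V,Z)+\d(\e))}$ reconstruction codewords $\yh^n(l,m,z^n)$ i.i.d.\ from $\prod p(\yh_i|v_i(l),z_i)$. The action encoder picks $L$ with $(v^n(L),s^n)\in\aep$ and sets $A_i=f(v_i(L),S_i)$. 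The compressor, observing $(y^n,z^n)$, looks for some $L'$ with $(v^n(L'),y^n,z^n)\in\aep$ (noting $L'$ need not equal $L$), transmits the bin index of $L'$, and then an index $M$ with $(v^n(L'),y^n,z^n,\yh^n(L',M,z^n))\in\aep$. The decoder, using $z^n$, finds the unique $L''$ in the received bin with $(v^n(L''),z^n)\in\aep$ and outputs $\yh^n(L'',M,z^n)$.

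The analysis proceeds by verifying: (i) the covering lemma ensures $(v^n(L),s^n)\in\aep$ w.h.p., and combined with the Markov chains $V-S-Z$ and $V-(A,S)-Y$, the tuple $(v^n(L),s^n,a^n,y^n,z^n)$ is jointly typical w.h.p., so $L$ itself is a valid candidate for $L'$; (ii) given $(v^n(L'),y^n,z^n)\in\aep$, the conditional covering lemma produces a suitable $M$ w.h.p.\ since the reconstruction codebook rate exceeds $I(\Yh;Y|V,Z)$; (iii) at the decoder, a standard Wyner--Ziv/packing-lemma calculation bounds the expected number of codewords in a bin that are spuriously jointly typical with $z^n$ by $2^{n(I(V;S)+\d(\e)-R_1-I(V;Z))}$, which vanishes since $R_1>I(V;S)-I(V;Z)=I(V;S|Z)$ (using $V-S-Z$). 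Hence $L''=L'$ w.h.p., the decoder retrieves the intended $\yh^n$ codeword, and the typical average lemma yields both the distortion and cost constraints up to $\d(\e)$ slack. The total rate is $R_1+I(\Yh;Y|V,Z)+\d(\e)$, which approaches $I(V;S|Z)+I(\Yh;Y|V,Z)$ as $\e\to 0$.

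The main obstacle, as in Theorem~\ref{thm:1}'s achievability, is that the compressor may lock onto a $V^n$ codeword different from the one chosen by the action encoder. I must verify that (a) any such $v^n(L')$ jointly typical with $(y^n,z^n)$ still supports a valid conditional covering step for $\Yh^n$ producing an acceptable reconstruction, and (b) the Wyner--Ziv layer correctly recovers whichever codeword the compressor actually selected, which holds because $v^n(L')$ is jointly typical with the shared $z^n$ w.h.p.\ and the bin rate $R_1$ exceeds $I(V;S|Z)$.
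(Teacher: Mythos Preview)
Your proposal is correct and follows essentially the same route as the paper's sketch: generate $V^n$ codewords at rate $I(V;S)$, Wyner--Ziv bin them to rate $I(V;S|Z)$, generate a conditional $\Yh^n$ codebook for each $(v^n,z^n)$ pair, have the compressor pick \emph{any} $v^n(L')$ jointly typical with $(y^n,z^n)$ (not necessarily the action encoder's $L$), and have the decoder recover $L'$ from the bin index and $z^n$. The paper makes the same observation you flag in your last paragraph---that because $Z^n$ is available at the compressor, the selected $v^n(L')$ is guaranteed jointly typical with $z^n$, so Wyner--Ziv decoding recovers it regardless of whether $L'=L$.
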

\textit{Sketch of achievability}:

We generate $2^{n(I(V;S) +\e)}$ $V^n(l_0)$, $l_0\in [1:2^{n(I(V;S) +\e)}]$, sequences according to $\prod_{i=1}^n p(v_i)$, and for each $v^n(l_0)$ and $z^n$, generate $2^{n(I(\Yh;Y|V,Z)+\e)}$ $\Yh^n(l_0, l_1)$, $l_1 \in [1:2^{n(I(\Yh;Y|V,Z)+\e)}]$, sequences according to $\prod_{i=1}^n p(\yh_i|v_i,z_i)$. The set of $v^n$ sequences are then randomly binned to $2^{n(I(V;S|Z) +2\e)}$ bins, $\Bc(m)$, $m \in [1:2^{n(I(V;S|Z) +2\e)}]$. Given a sequence $s^n$, the action encoder finds the $v^n$ sequence which is jointly typical with $s^n$ and takes actions according to $a_i = f(s_i,v_i)$ for $i \in [1:n]$. At the compressor, we first find a $\vt^n(l_0)$ that is jointly typical with $(y^n, z^n)$ and then, a $\yh^n(l_0,l_1)$ such that $(\vh^n(l_0), \yh^n(l_0,l_1), y^n,z^n) \in \aep$. Note that there exists at least one $\vt^n(l_0)$ that is jointly typical with $(y^n,z^n)$ with high probability since the true $v^n$ sequence is jointly typical with $(y^n,z^n)$ with high probability. If there is more than one such sequence, the compressor chooses one uniformly at random from the set of $\vt^n$ sequences jointly typical with $(y^n, z^n)$. The compressor then sends the indices $m$ and $l_1$ such that the selected $\vt^n(l_0) \in \Bc(m)$. The decoder recovers $\vt^n(l_0)$ by looking for the unique $\lh_0 \in \Bc(m)$ such that $(\vh^n(\lh_0), z^n) \in \aep$. It reconstructs $Y^n$ as $\yh(\lh_0, l_1)$. From the rates given, it is easy to see that all encoding and decoding steps succeed with high probability as $n \to \infty$.

{
\subsection{Side information available at the decoder only}
When the side information is available at the decoder only and $S$ is know non-causally at the action encoder, we discuss two possible achievability schemes. The first scheme is a generalization of the achievability scheme of Theorem 1 to the lossy case. 
\begin{theorem} \label{thm:5}
An upper bound on the rate-distortion-cost function for the case with non-causal state information and side information at the decoder is given by
\begin{align*}
R(B,D) = I(V;S|Z) + I(U;Y|V,Z)
\end{align*}
for some $p(z,s)p(v|s)1_{a = f(s,v)}p(y|a,s)p(u|y)$ satisfying
\begin{align*}
\E \Lambda(A,S,Y) &\le B, \\
\E d(Y, \Yh(Z,U)) & \le D.
\end{align*}
\end{theorem}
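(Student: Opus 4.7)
The plan is to lift the achievability argument from Theorem~1 to the lossy setting by introducing a Wyner--Ziv-type quantizer $U$ that replaces the (previously lossless) reconstruction of $Y$, with the decoder producing $\Yh_i=\yh(Z_i,U_i)$ symbol-by-symbol. The target rate decomposes as $I(V;S|Z)$, which pays for the decoder to implicitly identify a covering $V^n$ using its side information $Z^n$, plus $I(U;Y|V,Z)$, which is the Wyner--Ziv rate for conveying the quantization $U^n$ to a decoder that effectively already has $(V^n,Z^n)$.

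I would fix a conditional $p(v|s)$, action map $a=f(s,v)$, quantizer kernel $p(u|y)$, and symbol-wise reconstruction $\yh(z,u)$ meeting the cost and distortion constraints, and then generate two \emph{independent} random codebooks: $2^{n(I(V;S)+\delta)}$ sequences $V^n(l_0)\sim\prod p(v_i)$ to cover $S^n$, and $2^{n(I(U;Y)+\delta)}$ sequences $U^n(l_1)\sim\prod p(u_i)$ to cover $Y^n$. Following the binning trick of the Theorem~1 proof, only the $U^n$ codewords are placed uniformly at random into $2^{nR}$ bins $\Bc(m)$, with $R=I(V;S|Z)+I(U;Y|V,Z)+3\delta$; no explicit $V^n$ index is transmitted. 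The action encoder picks $V^n(l_0^\ast)$ jointly typical with $S^n$, applies $A_i=f(S_i,V_i(l_0^\ast))$, and after the channel produces $Y^n$ the compressor picks $U^n(l_1^\ast)$ jointly typical with $Y^n$ and sends the bin index $m$. Mirroring Theorem~1, the decoder searches for the unique $\hat l_1\in\Bc(m)$ such that $(V^n(\hat l_0),U^n(\hat l_1),Z^n)\in\aep$ for \emph{some} $\hat l_0$, and outputs $\Yh_i=\yh(Z_i,U_i(\hat l_1))$.

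The two covering steps are standard, and the Markov chains $V-S-Z$ (from $p(v|s)$) and $U-Y-(V,S,A,Z)$ (from $p(u|y)$), combined with the Markov lemma, guarantee that the tuple $(V^n(l_0^\ast),S^n,A^n,Y^n,U^n(l_1^\ast),Z^n)$ is jointly typical with high probability. The hard step, as in Theorem~1, is the packing analysis. I would split the error event $\hat l_1\ne l_1^\ast$ into two sub-cases. When $\hat l_0=l_0^\ast$, a fresh $U^n$ codeword independent of $(V^n(l_0^\ast),Z^n)$ must be jointly typical with them, which occurs with probability $\approx 2^{-nI(U;V,Z)}$; the identity $I(U;Y)-I(U;V,Z)=I(U;Y|V,Z)$, coming from $U-Y-(V,Z)$, collapses the expected confusion count to $2^{n(I(U;Y|V,Z)-R+\delta')}$. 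When $\hat l_0\ne l_0^\ast$, the three sequences $V^n(\hat l_0)$, $U^n(\hat l_1)$, and $Z^n$ are mutually independent, and their three-way joint-typicality probability expands to $2^{-n[I(V;Z)+I(U;V)+I(U;Z|V)]}$; using $I(V;S)-I(V;Z)=I(V;S|Z)$ (from $V-S-Z$) and $I(U;Y|V)-I(U;Z|V)=I(U;Y|V,Z)$ (from $U-Y-Z$ given $V$), the expected count collapses to $2^{n(I(V;S|Z)+I(U;Y|V,Z)-R+\delta')}$. Both sub-cases vanish once $R>I(V;S|Z)+I(U;Y|V,Z)$.

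Closure of cost and distortion is then routine: on the successful decoding event, $(S^n,A^n,Y^n)$ is jointly typical so $\E\Lambda(S,A,Y)\le B+\epsilon$, and $(Y^n,Z^n,U^n(l_1^\ast))$ is jointly typical so the typical-average lemma applied to $d(y,\yh(z,u))$ gives $\E d(Y,\Yh)\le D+\epsilon$; sending $\delta\downarrow 0$ delivers the claimed upper bound. I expect the three-way joint-typicality bookkeeping in the $\hat l_0\ne l_0^\ast$ sub-case to be the most delicate point, since one has to verify that the Markov structure among the hidden variables $(S,Y)$ is precisely what makes the two mutual-information exponents combine into exactly $I(V;S|Z)+I(U;Y|V,Z)$ rather than into something strictly larger.
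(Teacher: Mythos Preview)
Your proposal is correct and follows essentially the same scheme as the paper: independent $V^n$ and $U^n$ covering codebooks, binning only the $U^n$ codewords at rate $I(V;S|Z)+I(U;Y|V,Z)$, and a decoder that searches for the unique $U^n$ in the received bin that is jointly typical with $(V^n(l_0),Z^n)$ for \emph{some} $l_0$. The only organizational difference is in the packing step: you split the confusion event into $\hat l_0=l_0^\ast$ and $\hat l_0\ne l_0^\ast$, whereas the paper takes a single union bound over \emph{all} $l_0\in[1:2^{n(I(V;S)+\epsilon)}]$, exploiting that for each \emph{fixed} $l_0$ the codeword $V^n(l_0)$ is independent of $Z^n$ (the codebook is generated before the source is seen). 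Both decompositions yield the same exponent and the same rate condition; the paper's uniform union bound is slightly cleaner because it avoids having to separately argue about the dependence structure when $\hat l_0=l_0^\ast$.
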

Theorem \ref{thm:5} is generalization of Theorem 1 since if we let $U = Y$, we recover Theorem 1. 

\begin{proof}
As the achievability scheme is an extension of the achievability scheme for Theorem 1, we will only mention the additional steps in the proof of achievability. 
\subsection*{Codebook Generation}  
The additional step in the codebook generation procedure is in the generation of a codebook of $U^n$ covering sequences to cover  $Y^n$ and a binning or compression codebook for the $ U^n$ sequences. We first generate $2^{n(I(U;Y) + \e)}$ $U^n(l_1)$ sequences according to $\prod_{i=1}^n p(u_i)$. We then bin the set of $U^n$ sequences into $2^{n (I(V;S|Z) + I(U;Y|V,Z) + 5\e)}$ bins, $\Bc(M)$, $M \in [1:2^{n (I(V;S|Z) + I(U;Y|V,Z) + 5\e)}]$.
\subsection*{Encoding}
The encoding procedure for the action encoder remains the same as that in Theorem 1. For the compressor, it first looks for a $u^n(l_1)$ such that $(u^n(l_1), y^n) \in \aep$. It then sends out the index $m$, such that $u^n(l_1) \in \Bc(m)$.
\subsection*{Decoding and analysis of probability of error}
The decoder looks for the unique $u^n(\lh_1) \in \Bc(m)$ such that $(v^n(l_0), y^n,z^n, u^n(l_1)) \in \aep$ for some $l_0 \in [1:2^{n(I(V;S) + \e}]$. For the analysis of probability of error, let $L_0$ and $L_1$ be the indices picked by the action encoder and the compressor respectively. Following the rates given in the codebook generation and encoding procedure, the covering lemma \cite[Chapter 3]{El-Gamal--Kim2010} and the strong Markov lemma \cite[Chapter 12]{El-Gamal--Kim2010}, it is easy to see that $\P(V^n(L_0), Y^n, ,Z^n, U^n(L_1) \in \aep) \to 1$ as $n \to \infty$. The other ``error'' event of interest is now the following.
\begin{align*}
\Ec_U &:= \{(V^n(l_0), Z^n, U^n(\lh_1)) \in \aep \\
& \qquad \mbox{ for some } \lh_1 \neq L_1, \U^n(\lh_1) \in \Bc(M) ,\\
& \qquad l_0 \in [1:2^{n(I(V;S) + \e)}]\}.
\end{align*}
Due to the symmetry of the binning process, we can assume without loss of generality, $M = 1$. Define $\Ec_{l_0}(V^n,Z^n)$ to be the event
\begin{align*}
\Ec_{l_0}(V^n,Z^n) &:= \{(V^n(l_0), Z^n, U^n(\lh_1)) \in \aep \\
& \qquad \mbox{ for some } \lh_1 \neq L_1, \U^n(\lh_1) \in \Bc(1)\}.
\end{align*}
Then, $\P(E_U)$ is upper bounded by 
\begin{align}
\P(E_U) &\le \sum_{l_0} \P(\Ec_{l_0}(V^n,Z^n)). \label{eqnerror1}
\end{align}
We now give a bound for $\P(\Ec_{l_0}(V^n,Z^n))$. We first have
\begin{align*}
&\P(\Ec_{l_0}(V^n,Z^n)) \\
& = \sum_{(v^n,2^{nI(V;S)}z^n) \in \aep}\left(\begin{array}{ll}p_{V,Z}(v^n,z^n)\mathbf{.}\\ \P(\Ec_{l_0}(V^n,Z^n)|V^n = v^n, Z^n = z^n)
\end{array}\right) \\
& = \sum_{(v^n,z^n) \in \aep}\left(\begin{array}{ll}p_{V,Z}(v^n,z^n)\mathbf{.}\\ \P(\Ec_{l_0}(v^n,z^n)|v^n, z^n)
\end{array}\right).
\end{align*}

Next, let $\Ec_{l_0}(j, v^n,z^n)$ be the error event $\{(v^n(l_0), z^n, U^n(j)) \in \aep, L_1 \neq j, U^n(j) \in \Bc(1)\}.$ Then, $\Ec_{l_0}(v^n,z^n) \subseteq \cup_{j=1}^{2^{n(I(U;Y) + \e)}}\Ec_{l_0}(j,v^n,z^n)$. We therefore have
\begin{align*}
& \P(\Ec_{l_0}(v^n,z^n|v^n,z^n)) \\
& \le \sum_{j=1}^{2^{n(I(U;Y) + \e)}} \P(\Ec_{l_0}(j,v^n,z^n)|v^n,z^n) \\
& \le \sum_{j=1}^{2^{n(I(U;Y) + \e)}} \P((v^n, z^n, U^n(j)) \in \aep, U^n(j) \in \Bc(1)|v^n,z ^n) \\
& = \sum_{j=1}^{2^{n(I(U;Y) + \e)}} \left(\begin{array}{ll}\P((v^n, z^n, U^n(j)) \in \aep|v^n,z^n)\mathbf{.}\\ \P(U^n(j) \in \Bc(1)|v^n,z^n)\end{array}\right) \\
& \le 2^{n(I(U;Y) + \e)}.2^{-n(I(U;Z,V) -\e)}. 2^{-n(I(V;S|Z) + I(U;Y|V,Z) + 5\e)}\\
& = 2^{-n(I(V;S|Z)+3\e)}. 
\end{align*}

We therefore have
\begin{align*}
&\P(\Ec_{l_0}(V^n,Z^n)) \\
&\le \sum_{v^n,z^n \in \aep} \P(V^n(l_0) = v^n, Z^n = z^n)\mathbf{.}2^{-n(I(V;S|Z)+3\e)} \\
& \le 2^{-n(I(V;Z)-\e)}\mathbf{.}2^{-n(I(V;S|Z)+3\e)} .
\end{align*}
Hence, from \eqref{eqnerror1},
\begin{align*}
\P(E_U) &\le 2^{n(I(V;S) + \e)}2^{-n(I(V;Z)-\e)}\mathbf{.}2^{-n(I(V;S|Z)+3\e)} \\
& = 2^{-n\e}.
\end{align*}
Therefore, $\P(E_U) \to 0$ as $n \to \infty$.

Since the probability of ``error'' goes to zero as $n \to \infty$, the expected distortion of the reconstruction $\Yh_i = \yh(z_i, u_i)$, $i \in [1:n]$, is less than or equal to $D$ as $n \to \infty$.
\end{proof}
The achievability scheme in Theorem \ref{thm:5} restricts the description of $Y^n$ that is sent, $U^n$, to be independent of $V^n$ given $Y^n$. This is a result of not requiring the compressor to decode the true $V^n$ codeword that was selected by the action encoder. In our next scheme, we remove the Markov condition, $U-Y-V$, by making the compressor decode $V^n$. This operation results in a different restriction on the allowable joint probability distribution, $I(V;Y) \ge I(V;S)$. 

\begin{theorem} \label{thm:6}
An upper bound on the rate-distortion-cost function for the case with non-causal state information and side information at the decoder is given by
\begin{align*}
R(B,D) = I(V;S|Z) + I(U;Y|V,Z)
\end{align*}
for some $p(z,s)p(v|s)1_{a = f(s,v)}p(y|a,s)p(u|y,v)$ satisfying
\begin{align*}
\E \Lambda(A,S,Y) &\le B, \\
\E d(Y, \Yh(Z,U)) & \le D, \\
I(V;S) &\le I(V;Y).
\end{align*}
\end{theorem}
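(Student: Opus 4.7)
The plan is to modify the scheme of Theorem~\ref{thm:5} by having the compressor \emph{explicitly decode} the $V^n$ codeword selected by the action encoder, so that the auxiliary random variable $U$ can depend jointly on $(V,Y)$ rather than on $Y$ alone. The price paid is the packing-lemma condition $I(V;S)\le I(V;Y)$, which is exactly the extra hypothesis of the theorem. The overall architecture is a superposition Wyner--Ziv scheme: an outer $V$ codebook that is binned against the side information $Z$, and, for each outer codeword, an inner conditional $U$ codebook that is also binned against $Z$.

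\textbf{Codebook and encoding.} Fix any $p(v|s)$, $f$, $p(u|v,y)$ and $\hat y(u,z)$ achieving the minimum. Generate $2^{n(I(V;S)+\epsilon)}$ codewords $v^n(l_0)$ i.i.d.\ $\sim\prod p_V$ and bin them uniformly into $2^{n(I(V;S|Z)+2\epsilon)}$ bins $\Bc_V(m_0)$; for each $l_0$ generate $2^{n(I(U;Y|V)+\epsilon)}$ superposition codewords $u^n(l_0,l_1)\sim\prod p_{U|V}(\cdot\mid v_i(l_0))$ and bin them into $2^{n(I(U;Y|V,Z)+2\epsilon)}$ bins $\Bc_U(l_0,m_1)$. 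The action encoder picks any $l_0$ with $(v^n(l_0),s^n)\in\aep$ (covering lemma) and generates $a_i=f(s_i,v_i(l_0))$. Given $y^n$, the compressor first finds the unique $\hat l_0$ with $(v^n(\hat l_0),y^n)\in\aep$, which by the packing lemma succeeds and returns the true $l_0$ with high probability under $I(V;S)+\epsilon<I(V;Y)$; it then finds $l_1$ with $(v^n(l_0),u^n(l_0,l_1),y^n)\in\aep$ via the conditional covering lemma, and transmits $(m_0,m_1)$ determined by the two bins containing the chosen codewords.

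\textbf{Decoding, rate bookkeeping, and main obstacle.} The decoder first finds the unique $\tilde l_0\in\Bc_V(m_0)$ with $(v^n(\tilde l_0),z^n)\in\aep$; the expected number of competitors in a bin is $2^{n(I(V;Z)-\epsilon)}$ using the identity $I(V;S)-I(V;S|Z)=I(V;Z)$, which follows from the Markov chain $V-S-Z$, so uniqueness holds w.h.p. It then finds the unique $\tilde l_1\in\Bc_U(\tilde l_0,m_1)$ with $(v^n(\tilde l_0),u^n(\tilde l_0,\tilde l_1),z^n)\in\aep$; here the subbin has expected size $2^{n(I(U;Z|V)-\epsilon)}$, where $I(U;Y|V)-I(U;Y|V,Z)=I(U;Z|V)$ uses the Markov chain $U-(V,Y)-Z$ that is built into the assumed distribution. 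Setting $\hat y_i=\hat y(u_i(\tilde l_0,\tilde l_1),z_i)$ and invoking the typical-average lemma yields $\E d(Y^n,\hat Y^n)\le D+\delta(\epsilon)$ and $\E\Lambda(A^n,S^n,Y^n)\le B+\delta(\epsilon)$. The main obstacle, and the only new ingredient compared to Theorem~\ref{thm:5}, is the compressor's decoding of $V^n$ from $Y^n$: this step is precisely where the extra hypothesis $I(V;S)\le I(V;Y)$ is consumed, and it is what allows us to drop the Markov constraint $U-Y-V$ that was forced in Theorem~\ref{thm:5}.
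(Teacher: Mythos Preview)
Your proposal is correct and follows essentially the same scheme as the paper's own sketch: a superposition code with an outer $V$-codebook and conditional inner $U$-codebooks, both binned against $Z$; the compressor first uniquely decodes $V^n$ from $Y^n$ via the packing lemma (consuming the hypothesis $I(V;S)\le I(V;Y)$), then covers $Y^n$ with a conditional $U^n$ codeword, and the decoder performs two-stage Wyner--Ziv decoding using $Z^n$. Your write-up is in fact slightly more careful than the paper's sketch in that you correctly have the compressor cover based on $(v^n,y^n)$ alone (the paper's sketch inadvertently writes $(v^n,u^n,y^n,z^n)\in\aep$ even though $z^n$ is unavailable at the compressor), and you make explicit the Markov chains $V-S-Z$ and $U-(V,Y)-Z$ that justify the bin-size arithmetic.
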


\noindent \textit{Sketch of achievability}

We generate $2^{n(I(V;S) +\e)}$ $V^n(l_0)$, $l_0\in [1:2^{n(I(V;S) +\e)}]$, sequences according to $\prod_{i=1}^n p(v_i)$, and for each $v^n(l_0)$, generate $2^{n(I(U;Y|V)+\e)}$ $U^n(l_0, l_1)$, $l_1 \in [1:2^{n(I(U;Y|V,Z)+\e)}]$, sequences according to $\prod_{i=1}^n p(u_i|v_i)$. The set of $V^n$ sequences are partitioned to $2^{n(I(V;S|Z) +2\e)}$ bins, $\Bc(m_0)$, $m_0 \in [1:2^{n(I(V;S|Z) +2\e)}]$, while the set of $U^n$ sequences are partitioned to $2^{n(I(U;Y|V,Z) + 2\e)}$ bins, $\Bc(m_0)$, $m_0 \in [1:2^{n(I(U;Y|V,Z) + 2\e)}]$. Given a sequence $s^n$, the action encoder finds the $v^n$ sequence which is jointly typical with $s^n$ and takes actions according to $a_i = f(s_i,v_i)$ for $i \in [1:n]$. At the compressor, we first find $v^n(l_0)$ by joint typicality decoding. It can be shown that this decoding procedure succeeds with high probability provided $I(V;Y) \ge I(V;S)$ \cite{YHK}. Next, the compressor looks for a $u^n(l_0, l_1)$ such that $(v^n(l_0), u^n(l_0, l_1), y^n, z^n) \in \aep$. The compressor then sends the indices $m_0$ and $m_1$ such that $v^n(l_0) \in \Bc(m_0)$ and $u^n(l_0, l_1) \in \Bc(m_1)$. The decoding operation now follows standard Wyner-Ziv decoding. The decoder first recovers $v^n(l_0)$ by looking for the unique $\lh_0 \in \Bc(m_0)$ such that $(\vh^n(\lh_0), z^n) \in \aep$. Next, it recovers $u^n(l_0, l_1)$ by looking for the unique $\uh^n(\lh_0,\lh_1)$ such that $(\vh^n(\lh_0), \uh^n(\lh_0,\lh_1), z^n) \in \aep$. It reconstructs $Y^n$ as $\yh_i(\vh_{i}(\lh_0), \uh_i(\lh_0, \lh_1), z_i)$ for $i \in [1:n]$. From the rates given, the encoding and decoding steps succeed with high probability as $n \to \infty$.

\section{Conclusion and future directions} \label{sect:5}
In this paper, we consider a variation of lossy and lossless compression where, instead of compressing the original source, we take actions to modify the source before compression, subject to a cost constraint. In the lossless case, we characterize the rate-cost tradeoff for several different cases, including the cases where side information is available at the decoder only, and where the original source $S^n$ is either known causally or non-causally at the action encoder. We then extended the analysis to the lossy case, where we characterize the rate-distortion-cost tradeoff for the case where $S^n$ is know only causally and side information is available at both the compressor and the decoder.

Our setting can be extended in several different directions. One possible extension is to consider the case of message embedding, where we desire to send a message together with conveying information about $Y^n$. Another extension that may be of interest is to consider the case where we have distributed state information $S_1$ and $S_2$ which are correlated at two different action encoders. We are still interested in the output $Y$, but the additional dimension in this extension is in how the two distributed action encoders can coordinate to generate an output $Y^n$ that is as compressible as possible. 
}
\section{Acknowledgement}
This work is supported in part by the National Science Foundation (NSF) through Grant 0939370-CCF and in part by Air Force Office of Scientific Research (AFOSR) through Grant FA9550-10-1-0124. We thank Professor Abbas El Gamal for helpful discussion.

\begin{appendix}
\subsection{Proof of Lemma~\ref{lemma.noncausal}}
Fixing a $v$, the function $a=f(s,v)$ has only four possible forms: $a=s$, $a=1-s$, $a=0$ and $a=1$. Thus, we can divide $\cV$ into four groups:
\begin{eqnarray}
\cV_0 &=&\{v: f(s,v) = s\}\nonumber\\
\cV_1 &=&\{v: f(s,v) = 1-s\}\nonumber\\
\cV_2 &=&\{v: f(s,v) = 0\}\nonumber\\
\cV_3 &=&\{v: f(s,v) = 1\}
\end{eqnarray}
First, it is without loss of optimality to set $\cV_3 =\emptyset$. That is because for each $v\in\cV_3$, we can change the function to $f(s,v)=0$. The rate $I(V;S)+H(Y|V)$ does not change and the cost $EA$ only decreases.

Rewrite the objective function in the following way
\begin{eqnarray}
&\quad&I(V;S)+H(Y|V) \label{eq.obj_binary}\\
&=& H(Y|V) - H(S|V)+ H(S)\nonumber\\
&=& H(S\oplus A\oplus Z|V) - H(S|V)+ H(S)\nonumber\\
&=& \sum_{v\in\cV_0} \big(H_2(p) - H(S|V=v) \big)p(v)   \nonumber\\
&\quad&\quad +  \sum_{v\in\cV_1} \big( H_2(p)-H(S|V=v)\big) p(v) \nonumber\\
&\quad&\quad +  \sum_{v\in\cV_2} \big(H( S \oplus S_{N} |V=v)-H(S|V=v)\big) p(v)\nonumber
\end{eqnarray}
where the last step is obtained by plugging in the actual form of $a=f(s,v)$ for each group of $v$.

Second, it is sufficient to have $|\cV_0|=1$ and $|\cV_1|=1$. To prove this, let $v_1,v_2\in \cV_0$. Note that $H(S|V=v)$ is a concave function in $p(s|V=v)$. Thus if we replace $v_1,v_2$ by a $v_3$ with $p(v_3)=p(v_1)+p(v_2)$ and
\begin{eqnarray*}
p(s|V=v_3)
&=& \frac{p(v_1)}{p(v_1)+p(v_2)}p(s|V=v_1)\\
&&\quad\quad+\frac{p(v_2)}{p(v_1)+p(v_2)}p(s|V=v_2),
\end{eqnarray*}
we preserve the distribution of $S$, the cost $EA$ but we reduce the first term, i.e., $\sum_{v\in\cV_0} \big(H_2(p) - H(S|V=v) \big)p(v)$,  in Eq.~(\ref{eq.obj_binary}).
Therefore, we can set $\cV_0 = \{0\}$ and $\cV_1 = \{1\}$.

Third note that for each $v\in\cV_2$,
\begin{eqnarray}
&\quad&H(Y|V=v) - H(S|V=v)\nonumber\\
&=& H(S\oplus A \oplus Z |V=v)- H(S|V=v)\nonumber\\
&=&  H(S\oplus S_N |V=v)- H(S|V=v)\nonumber\\
&\geq& 0
\end{eqnarray}

Last, if $\Pr(S=0|V=0)\neq \Pr(S=1|V=1)$, consider a new auxiliary random variable $V'$ with the following distribution:
\begin{itemize}
  \item $\mathcal{V'}=\{0,1,2\}$, $\Pr(V'=0)=\Pr(V'=1)=(\Pr(V=0)+\Pr(V=1))/2$
  \item The function $a=f(s,v')$ is of the form: $f(s,0)=s$, $f(s,1)=1-s$ and $f(s,2)=0$.
  \item $\Pr(S=0|V'=2)=1/2$ and
  \end{itemize}
        \begin{eqnarray*}
        &\quad&\Pr(S=1|V'=0)=\Pr(S=0|V'=1) \\
        &=& \frac{\Pr(S=1|V=0)\Pr(V=0)+\Pr(S=0|V=1)\Pr(V = 1)}{\Pr(V=0)+\Pr(V=1)}.
        \end{eqnarray*}

Comparing $(S,V')$ with $(S,V)$, we can check that the cost $EA$ and the distribution of $S$ are preserved. Meanwhile, the objective function is reduced, which completes the proof.

\subsection{Proof of Lemma~\ref{lemma.causal}}

Similar to the proof of Lemma~\ref{lemma.noncausal}, we divide $\cV$ in to $\cV_0,\cV_1,\cV_2,\cV_3$. Using the same argument, we show that $\cV_3=\emptyset$. Rewrite the objective function $H(Y|V)$ in the following way:
\begin{eqnarray}
&\quad&H(Y|V) \label{eq.obj2_binary}\\
&=& H(S\oplus A\oplus S_N|V) \nonumber\\
&=& \sum_{v\in\cV_0} H_2(p) p(v)   \nonumber\\
&\quad&\quad +  \sum_{v\in\cV_1} H_2(p) p(v) \nonumber\\
&\quad&\quad +  \sum_{v\in\cV_2} \big(H( S \oplus S_N |V=v)p(v)\nonumber\\
&=& H_2(p)\sum_{v\in\cV_0 \bigcup \cV_1}p(v) + \sum_{v\in\cV_2}p(v),\nonumber
\end{eqnarray}
which implies that it is sufficient to consider the case $|\cV_0|=1$, $\cV_1=\emptyset$ and $|\cV_2|=1$. And this completes the proof.
\end{appendix}
\end{document}